\newcounter{mytempeqncnt}
\newtheorem{lemma}{\bf Lemma}
\newtheorem{theorem}{\bf Theorem}
\newtheorem{proposition}{\bf Proposition}
\newcommand*{\rom}[1]{\expandafter\@slowromancap\romannumeral #1@}
\begin{document}

\title{\textbf{Consensus based Detection in the Presence of Data Falsification Attacks}}
\author{Bhavya~Kailkhura,~\IEEEmembership{Student Member,~IEEE}, Swastik~Brahma,~\IEEEmembership{Member,~IEEE}, 
Pramod~K.~Varshney,~\IEEEmembership{Fellow,~IEEE}
\thanks{This work was supported by the Center for Advanced Systems and Engineering at Syracuse University.}
\thanks{The authors would like to thank Aditya Vempaty and Arun Subramanian for their valuable comments and suggestions to improve the quality of the paper.}
\thanks{B. Kailkhura, S. Brahma and P. K. Varshney are with Department of EECS, Syracuse University, Syracuse, NY 13244. (email: bkailkhu@syr.edu; skbrahma@syr.edu; varshney@syr.edu)}}

\maketitle

\begin{abstract} 
This paper considers the problem of detection in distributed networks in the presence of data falsification (Byzantine) attacks. Detection approaches considered in the paper are based on fully distributed consensus algorithms, where all of the nodes exchange information only with their neighbors in the absence of a fusion center. In such networks, we characterize the negative effect of Byzantines on the steady-state and transient detection performance of the conventional consensus based detection algorithms.  
To address this issue, we study the problem from the network designer's perspective. 
More specifically, we first propose a distributed weighted average consensus algorithm that is robust to Byzantine attacks. We show that, under reasonable assumptions, the global test statistic for detection can be computed locally at each node using our proposed consensus algorithm. We exploit the statistical distribution of the nodes' data to devise techniques for mitigating the influence of data falsifying Byzantines on the distributed detection system.
Since some parameters of the statistical distribution of the nodes' data might not be known \textit{a priori}, we propose learning based techniques to enable an adaptive design of the local fusion or update rules. 
\end{abstract}
\begin{keywords}
distributed detection, consensus algorithms, data falsification attacks, Byzantines
\end{keywords}

\section{Introduction}
Distributed detection is a well studied topic
in the detection theory literature \cite{Varshney, Viswanathan, veer}. 
The traditional distributed detection framework comprises of a group of spatially distributed nodes which acquire the observations regarding the phenomenon of interest and send them to the fusion center (FC) where a global decision is made. 
However, in many scenarios a centralized FC may not be available or in large networks, the FC can become an information bottleneck that may cause degradation of system performance, and may even lead to system failure. 
Also, due to the distributed nature of future communication networks, and various practical constraints, e.g., absence of the FC, transmit power or hardware constraints and dynamic characteristic of wireless medium, it may be desirable to employ alternate peer-to-peer local information exchange in order to reach a global decision.
One such distributed approach for peer-to-peer local information exchange and inference is the use of a consensus algorithm~\cite{paper2}. 

Recently, distributed detection based on consensus algorithms has been explored in~\cite{paper4,paper5,paper6,paper7,paper8,paper9}. In consensus based detection approaches, each node communicates only with its neighbors and updates its local state information about the phenomenon (summary statistic) by a local fusion rule that employs a weighted combination of its own value and those received from its neighbors. Nodes continue with this consensus iteration until the whole network converges to a steady-state value which is the global test statistic.
In particular, the authors in~\cite{paper7,paper8} considered average consensus based distributed detection and emphasized network designs based on the small world phenomenon for faster convergence~\cite{paper5}. A bio-inspired consensus scheme was introduced for spectrum sensing in~\cite{paper9}. However, these consensus-based fusion algorithms only ensure equal gain combining of local measurements. The authors in~\cite{paper4} proposed to use distributed weighted fusion algorithms for cognitive radio spectrum sensing. They showed that weighted average consensus based schemes outperform average consensus based schemes and achieve much better detection performance than the equal gain combining based schemes. However, the weighted average consensus based detection schemes are quite vulnerable to different types of attacks. 
One typical attack on such networks is a Byzantine attack. While Byzantine attacks (originally proposed in \cite{Lamport}) may, in general, refer to many types of malicious behavior, our focus in this paper is on  data-falsification attacks~\cite{frag, Rifa, Marano, Rawat, bhavyaj, Kailkhura2013, Kailkhura, aditya}.  
Thus far, research on detection in the presence of Byzantine attacks has predominantly focused on addressing these attacks under the centralized model~\cite{Marano,Rawat,paper11,aditya}.  
A few attempts have been
made to address the security threats in the distributed or consensus based schemes in recent research~\cite{tang,yu1,yu2,liu,yan,paper12}.
Most of these existing works on countering Byzantine or data falsification attacks in distributed networks rely on a threshold for detecting Byzantines. The main idea is to exclude nodes from neighbors list whose state information deviates significantly from the mean value.
In~\cite{yu1} and~\cite{yan}, two different defense schemes against data falsification attacks for distributed consensus-based detection were proposed. In~\cite{yu1}, the scheme eliminates the state value with the largest deviation from the local mean at each iteration step and, therefore, it can only deal with the situation in which only one Byzantine node exists. It excludes one state value even if there is no Byzantine node. In~\cite{yan}, the vulnerability of distributed consensus-based spectrum sensing was analyzed and an outlier detection algorithm with an adaptive threshold was proposed.
The authors in~\cite{liu} proposed a Byzantine mitigation technique based on adaptive local thresholds.
This scheme mitigates the misbehavior of
Byzantine nodes and tolerates the occasional
large deviation introduced by honest users. It
adaptively reduces the corresponding coefficients so that the Byzantines will eventually be isolated from the network. 

Excluding the Byzantine nodes from the fusion process may not be the best strategy from the network designer's perspective. As shown in our earlier work~\cite{aditya} in the context of distributed detection with one-bit measurements under a centralized model, an intelligent way to improve the performance of the network is to use the information of the identified Byzantines to the network's benefit. More specifically, learning based techniques have the potential to outperform the existing exclusion based techniques. In this paper, we pursue such a design philosophy in the context of raw data based fusion in decentralized networks.

To design methodologies for defending against Byzantine attacks,  fundamental challenges that arise are two-fold. First, how do nodes recognize the presence of attackers? Second, after identification of an attacker or group of attackers, how do nodes adapt their operating parameters?
Due to the large number of nodes and complexity of the distributed network, we develop and analyze schemes that would update their own operating parameters autonomously. Our approach further introduces an adaptive fusion based detection algorithm which supports the learning of the attacker's behavior. Our scheme differs from all existing work on Byzantine mitigation based on exclusion strategies~\cite{tang,yu1,yu2,liu,yan}, where the only defense is to identify and exclude the attackers from the consensus process. 

\subsection{Main Contributions} 
In this paper, we focus on the susceptibility and protection of consensus based detection algorithms. Our main contributions are summarized as follows:
\begin{itemize}
\item We characterize the effect of Byzantines on the steady-state performance of the conventional consensus based detection algorithms. More specifically, we quantify the minimum fraction of Byzantines needed to make the deflection coefficient of the global statistic equal to zero. 
\item Using probability of detection and probability of false alarm as measures of detection performance, we investigate the degradation of transient detection performance of the conventional consensus algorithms with Byzantines.
\item We propose a robust distributed
weighted average consensus algorithm and obtain closed-form expressions for optimal weights to mitigate the effect of data falsification attacks.
\item Finally, we propose a technique based on the expectation-maximization algorithm  and maximum likelihood estimation to learn the operating parameters (or weights) of the nodes in the network to enable an adaptive design of the local fusion or update rules.
\end{itemize}

The rest of the paper is organized as follows. In Sections~\ref{model} and \ref{attacks}, we introduce our system model and Byzantine attack model, respectively.
In Section~\ref{susceptibility}, we study the security performance of weighted average consensus based detection schemes. In Section~\ref{protection}, we propose a protection mechanism to mitigate the effect of data falsification attacks on consensus based detection schemes. Finally, Section~\ref{conclusion} concludes the paper.

\section{System model}
\label{model}

First, we define the network model used in this
paper. 
\subsection{Network Model}
\begin{figure}[t] 
  \centering
    \includegraphics[width=0.4\textwidth]{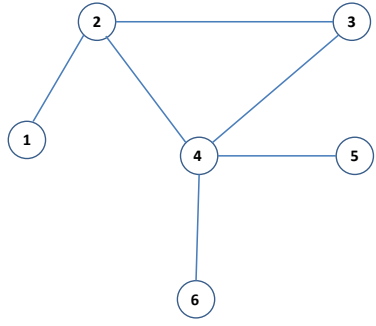}
    \caption{A distributed network with $6$ nodes}
    \label{fig1}
\end{figure}
 
We model the network topology as an undirected graph $ G=(V,E)$, where $V=\{v_{1},\cdots,v_{N}\}$ represents the set of nodes in the network with $|V|=N$.  
The set of communication links in the network correspond to the set of edges $E$, where $(v_{i},v_{j}) \in E$, if and only if there is a communication link between $v_i$ and $v_j$ (so that, $v_i$ and $v_j$ can directly communicate with each other). The adjacency matrix $ A $ of the graph is defined as
\[ a_{ij} = \left\{ \begin{array}{rll}
				1  & \mbox{if}\  (v_{i},v_{j}) \in E, \\
				0  & \mbox{otherwise.}
				\end{array}\right. 
\]
The neighborhood of a node $i$ is defined as
\begin{center}
$\mathcal{N}_{i}=\{v_{j} \in V : (v_{i},v_{j}) \in E\}, \forall i \in \{1,2,\cdots,N\}$.
\end{center}
The degree of a node $v_i$ in a graph $G$, denoted by $d_i$, is the number of
edges in $E$ which include $v_i$ as an endpoint, i.e., $d_i=\sum_{j=1}^Na_{ij}$.

The degree matrix $D$ is defined as a diagonal matrix with $\text{diag}(d_1,\cdots,d_N)$ and the Laplacian matrix $L$ is defined as  
\vspace{-0.05in}
\[ l_{ij} = \left\{ \begin{array}{rll}
				d_i  & \mbox{if}\ j=i,  \\
			   -a_{ij}  & \mbox{otherwise.}
				\end{array}\right. 
\]
or, in other words, $L=D-A$. For example, consider a network with six nodes trying to reach consensus (see Figure~\ref{fig1}). The Laplacian matrix $L$ for this network is given by
\begin{footnotesize}
\begin{equation*}
\begin{bmatrix}
   1 & -1& 0& 0& 0& 0 \\
   -1 & 3 & -1& -1& 0& 0 \\
    0 & -1& 2& -1& 0& 0 \\
    0 & -1& -1& 4& -1& -1 \\
    0 & 0& 0& -1& 1& 0 \\
    0 & 0& 0& -1& 0& 1
\end{bmatrix}.
\end{equation*}
\end{footnotesize}

The consensus based distributed detection scheme usually contains three phases: sensing, information fusion, and decision making. In the sensing phase, each node acquires the summary statistic about the phenomenon of interest. In this paper, we adopt the energy detection method so that the local summary statistic 
is the received signal energy. Next, in the information fusion phase, each node communicates with its neighbors to update their state values (summary statistic) and continues with the consensus iteration until the whole network converges to a steady state which is the global test statistic.
Finally, in the decision making phase, nodes make their own decisions about the presence of the phenomenon. Next, we describe each of these phases in more detail.

\subsection{Sensing Phase}
\label{Architecture}
We consider an $N$-node network using the energy detection scheme~\cite{paper16}. For the $i$th node, the sensed signal $x_{i}^k$ at time instant $k$ is given by

\[ x_{i}^k = \left\{ \begin{array}{rll}
				n_{i}^k,  & \mbox{under}\ H_{0} \\
			   h_{i}s^k+ n_{i}^k  &\mbox{under}\ H_{1}
				\end{array}\right. 
\]  

where $h_{i}$ is the channel gain, $s^k$ is the signal at time instant $k$, $n_{i}^k$ is AWGN, i.e., $n_{i}^k \sim N(0,\sigma_{i}^{2})$ and independent across time.  
Each node $i$ calculates a summary statistic $Y_{i}$ over a detection interval of $M$ samples, i.e.,
\begin{center}
$Y_{i}=\sum_{k=1}^M |x_{i}^k|^{2}$
\end{center} 
where $M$ is determined by the time-bandwidth product.
Since $Y_{i}$ is the sum of the square of $M$ i.i.d. Gaussian random variables, it can be shown that $\frac{Y_{i}}{\sigma_{i}^2}$ follows a central chi-square distribution with $M$ degrees of freedom $(\chi_{M}^2)$ under $H_{0}$, and, a non-central chi-square distribution with $M$ degrees of freedom and parameter $\eta_{i}$ under $H_{1}$, i.e.,
\[ \frac{Y_{i}}{\sigma_{i}^2} \sim \left\{ \begin{array}{rll}
				\chi_{M}^2,  & \mbox{under}\ H_{0} \\
			   \chi_{M}^2(\eta_{i})  &\mbox{under}\ H_{1}
				\end{array}\right. 
\]  
where
$\eta_{i}={E_{s}|h_{i}|^2}/{\sigma_{i}^2}$
is the local SNR at the $i$th node and 
$E_{s}=\sum_{k=1}^M |s^k|^{2}$
represents the sensed signal energy over $M$ detection instants. Note that the local SNR is $M$ times the average SNR at the output of the energy detector, which is $\frac{E_{s}|h_{i}|^2}{M\sigma_{i}^2}$. 

\subsection{Information Fusion Phase}
\label{algorithm}
Next, we give a brief introduction to conventional consensus algorithms~\cite{paper2}. Consensus is reached in two steps. 

Step 1: All nodes establish communication links with their neighbors, and broadcast their information state $x_{i}(0)=Y_i$. 
 
Step 2: Each node updates its local state information by a local fusion rule (weighted combination of its own value and those received from its neighbors). We denote node $i$'s updated information at iteration $k$ by $x_{i}(k)$. Node $i$ continues to broadcast information $x_{i}(k)$ and update its local information state until consensus is reached. 
This information state updating process can be written in a compact form as 
\begin{equation}
\label{consensusalgo}
x_{i}(k+1)=x_{i}(k)+\frac{\epsilon}{w_{i}}\underset{j\in \mathcal{N}_{i}}{\sum}(x_{j}(k)-x_{i}(k))
\end{equation}
where $\epsilon$ is the time step and $w_{i}$ is the weight assigned to node $i$'s information.
Using the notation $x(k)=[x_1(k),\cdots,x_N(k)]^T$, network dynamics in the matrix form can be represented as,
\begin{center}
$x(k+1)=Wx(k)$
\end{center}
where, $W=I-\epsilon\;\text{diag}(1/w_1,\cdots,1/w_N){L}$ is referred to as a Perron matrix. The consensus algorithm is nothing but a local fusion or update rule that fuses the nodes' local information state with information coming from neighbor nodes and every node asymptotically reaches the same information state for arbitrary initial values.  

\subsection{Decision Making Phase}
The final information state after reaching consensus for the above consensus algorithm will be the weighted average of the initial states of all the nodes~\cite{paper2} or $x_{i}^{*}={\sum_{i=1}^{N}w_{i}Y_i}/{\sum_{i=1}^{N}w_{i}}$, $\forall i$. Average consensus can be seen as a special case of weighted average consensus with $w_{i}=w,\;\forall i$.
After the whole network reaches
a consensus, each node makes its own decision
about the hypothesis using a predefined
threshold $\lambda$\footnote{\text{In practice, parameters such as threshold $\lambda$ and consensus time step $\epsilon$} can be set off-line. This study is beyond the scope of this work.}
\begin{center}
\[ \text{Decision} = \left\{ \begin{array}{rll}
				H_{1}  & \mbox{if}\ x_{i}^{*} > \lambda \\
			   H_{0}  & \mbox{otherwise}
				\end{array}\right.
\]
\end{center}
where weights are given by~\cite{paper4}
\begin{equation}
\label{convweight}
w_{i}=\dfrac{\eta_{i}/\sigma_{i}^2}{\sum_{i=1}^N{\eta_{i}/\sigma_{i}^2}}. 
\end{equation}
Note that, after reaching consensus $x_i^*=\Lambda,\forall i$. Thus, in rest the of the paper, $\Lambda$ is referred to as the final test statistic.

Next, we discuss Byzantine attacks on consensus based detection schemes and analyze the performance degradation of the weighted average consensus based detection algorithm due to these attacks. 

 \section{Attacks on Consensus based Detection Algorithms}
\label{attacks}
When there are no adversaries in the network, we noted in the last section that consensus can be accomplished to the weighted average of arbitrary initial values by having the nodes use the update strategy $x(k+1)=Wx(k)$ with an appropriate weight matrix $W$. Suppose, however, that instead of broadcasting the true sensing statistic $Y_i$ and applying the update strategy~\eqref{consensusalgo}, some nodes (referred to as Byzantines) deviate from the prescribed strategies.
Accordingly, Byzantines can attack in two ways: 
data falsification (nodes falsify their initial data or weight values) and consensus disruption (nodes do not follow update rule given by~\eqref{consensusalgo}).
More specifically, Byzantine node $i$ can do the following
\begin{eqnarray*}
&&\text{Data falsification:}\qquad\quad x_{i}(0)= Y_i+\Delta_i,\quad\text{or}\quad w_i\to \tilde{w}_i\\
&& \text{Consensus disruption:}\quad\;
x_{i}(k+1)= x_{i}(k)+\frac{\epsilon}{w_{i}}\underset{j\in \mathcal{N}_{i}}{\sum}(x_{j}(k)-x_{i}(k))+u_{i}(k),
\end{eqnarray*} 
where $(\Delta_i,\tilde{w}_i)$ and $u_i(k)$ are introduced at the initialization step and at the update step $k$, respectively. The attack model considered above is extremely general, and allows Byzantine node $i$ to update
its value in a completely arbitrary manner (via appropriate choices of $(\Delta_i,\tilde{w}_i)$, and $u_{i}(k)$, at each time step). 
An adversary performing consensus disruption attack
has the objective to disrupt the consensus operation.  
However, consensus disruption attacks can be easily detected because of the nature of the attack. The identification of consensus disruption attackers has been investigated in the past literature (e.g., see~\cite{sundar,paper5}) where control theoretic techniques were developed to identify disruption attackers in a single consensus iteration.  
Knowing the existence of such an identification mechanism, a smart adversary will aim to disguise itself while degrading the detection performance. 
In contrast to disruption attackers, data falsification attackers are more capable and can manage to disguise themselves while degrading the detection performance of the network by falsifying their data. 
Susceptibility and protection of consensus strategies to data falsification attacks has received scant attention, and this is the focus of our work. 
In this paper, we assume that an attacker performs only a data falsification attack by introducing $(\Delta_i,\tilde{w}_i)$ during initialization.
We exploit the statistical distribution of the initial values and devise techniques to mitigate the influence of Byzantines on the distributed detection system.

\subsection{Data Falsification Attack}
In data falsification attacks, attackers try to manipulate the final test statistic (i.e., $\Lambda=(\sum_{i=1}^{N}w_{i}Y_i)/(\sum_{i=1}^{N}w_{i})$) in a way that the detection performance is degraded. We consider a network with $N$ nodes that uses Algorithm~\eqref{consensusalgo} for reaching consensus. Algorithm~\eqref{consensusalgo} can be interpreted as, weight $w_{i}$, given to node $i$'s data $Y_i$ in the final test statistic, is assigned by node $i$ itself. So by falsifying initial values $Y_i$ or weights $w_i$, the attackers can manipulate the final test statistic. Detection performance will be degraded because Byzantine nodes can always set a higher weight to their manipulated information. Thus, the final statistic's value across the whole network will be dominated by the Byzantine node's local statistic that will lead to degraded detection performance. 

Next, we define a mathematical model for data falsification attackers. We analyze the degradation in detection performance of the network when Byzantines falsify their initial values $Y_i$ for fixed arbitrary weights $\tilde{w}_i$.

\subsection{Attack Model}  
The objective of Byzantines is to degrade the detection performance of the network by falsifying their data $(Y_i,w_i)$. 
By assuming that Byzantines are intelligent and know the true hypothesis, we analyze the worst case detection performance of the data fusion schemes. 
We consider the case when weights of the Byzantines have already been tampered to $\tilde{w}_i$ and analyze the effect of falsifying the initial values $Y_i$. 
This analysis provides the most favorable case from the point of view of Byzantines and yields the maximum performance degradation that the Byzantines can cause.
Now a mathematical model for a Byzantine attack is presented. Byzantines tamper with their initial values $Y_i$ and send $\tilde{Y_i}$ such that the detection performance is degraded. 

\noindent Under $H_0$: 
\[ \tilde{Y_{i}} =  \left\{ \begin{array}{rll}
				 Y_{i}+\Delta_{i}  & \mbox{with probability}\ P_{i} \\
			     Y_{i}  & \mbox{with probability}\ (1-P_i)\\
				\end{array}\right.  
\]
\noindent
Under $H_1$: 
\[ \tilde{Y_{i}} =  \left\{ \begin{array}{rll}
				 Y_{i}-\Delta_{i}  & \mbox{with probability}\ P_{i} \\
			     Y_{i}  & \mbox{with probability}\ (1-P_i)\\
				\end{array}\right.  
\]

where $P_{i}$ is the attack probability and $\Delta_{i}$ is a constant value which represents the attack strength, which is zero for honest nodes. As we show later, Byzantine nodes will use a large value of $\Delta_{i}$ so that the final statistic's value is dominated by the Byzantine node's local statistic that will lead to a degraded detection performance. 
We use deflection coefficient \cite{Kay} to characterize the security performance of the detection scheme due to its simplicity and its strong relationship with the global detection performance. Deflection coefficient of the global test statistic is defined as:
$\mathcal{D}(\Lambda)=\dfrac{(\mu_{1}-\mu_{0})^{2}}{\sigma_{(0)}^{2}}$, where $\mu_{k}=\mathbb{E}[\Lambda|H_k]$ is the conditional mean and $\sigma_{(k)}^2=\mathbb{E}[(\Lambda-\mu_{k})^2|H_k]$ is the conditional variance. The deflection coefficient is also closely related to other performance
measures, e.g., the Receiver Operating Characteristics (ROC) curve. In general, the detection performance monotonically increases with an increasing value of the deflection coefficient. We define the critical point of the distributed detection network as the minimum fraction of Byzantine nodes needed to make the deflection coefficient of global test statistic equal to zero (in which case, we say that the network becomes \textit{blind}) and denote it by $\alpha_{blind}$. We assume that the communication between nodes is error-free and our network topology is fixed during the whole consensus process and, therefore, consensus can be reached without disruption.

In the next section, we analyze the security performance of consensus based detection schemes in the presence of data falsifying Byzantines.
\begin{figure*}[]
\normalsize
\setcounter{mytempeqncnt}{\value{equation}}
\begin{equation}
\label{mu0}
\mu_{0}=\sum_{i=1}^{N_1}\left[P_i \dfrac{\tilde{w}_{i}}{\text{sum}(w)}(M\sigma_{i}^{2}+\Delta_{i})+(1-P_i)\dfrac{\tilde{w}_{i}}{\text{sum}(w)}(M\sigma_{i}^2)\right] 
+ \sum_{i=N_1+1}^{N}\left[\dfrac{w_{i}}{\text{sum}(w)}(M\sigma_{i}^2)\right]
\end{equation}

\begin{multline}
\label{mu1}
\mu_{1}=\sum_{i=1}^{N_1}\left[P_i \dfrac{\tilde{w}_{i}}{\text{sum}(w)}((M+\eta_{i})\sigma_{i}^{2}-\Delta_{i}) 
+(1-P_i)\dfrac{\tilde{w}_{i}}{\text{sum}(w)}((M+\eta_{i})\sigma_{i}^2)\right] \\+ \sum_{i=N_1+1}^{N}\left[\dfrac{w_{i}}{\text{sum}(w)}((M+\eta_{i})\sigma_{i}^2)\right]
\end{multline}

\begin{multline}
\label{sig}
\sigma_{(0)}^2=\sum_{i=1}^{N_1}\left(\frac{\tilde{w}_i}{\text{sum}(w)}\right)^2\left[P_i(1-P_i)\Delta_i^2+2M\sigma_i^4\right]+\sum_{i=N_1+1}^{N}\left(\frac{w_i}{\text{sum}(w)}\right)^2\left[2M\sigma_i^4\right]
\end{multline}

\hrulefill

\end{figure*}

\section{Performance analysis of consensus based detection algorithms}
\label{susceptibility}
In this section, we analyze the effect of data falsification attacks on conventional consensus based detection algorithms. 

First, in Section~\ref{seca}, we characterize the effect of Byzantines on the steady-state performance of the consensus based detection algorithms and determine $\alpha_{blind}$. 
Next, in Section~\ref{secb}, using probability of detection and probability of false alarm as measures of detection performance, we investigate the degradation of transient detection performance of the consensus algorithms with Byzantines.

\subsection{Steady-State Performance Analysis with Byzantines}
\label{seca}

Without loss of generality, we assume that the nodes corresponding to the first $N_1$ indices $i=1,\cdots,N_1$ are Byzantines and the rest corresponding to indices $i=N_1+1,\cdots,N$ are honest nodes. Let us define $w=[\tilde{w}_1,\cdots,\tilde{w}_{N_1},w_{N_1+1}\cdots,w_N]^T$ and  $\text{sum}(w)=\sum_{i=1}^{N_1}\tilde{w}_i+\sum_{i=N_1+1}^{N}w_i$.
\begin{lemma}
\label{La}
For data fusion schemes, the condition to blind the network or to make the deflection coefficient zero is given by
 \begin{center}
$\displaystyle\sum_{i=1}^{N_1} \tilde{w}_i (2P_i \Delta_i-\eta_i\sigma_i^2) =\sum_{i=N_1+1}^N w_i\eta_i\sigma_i^2$.
\end{center}
\end{lemma}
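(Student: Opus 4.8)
The plan is to work directly from the closed‑form moment expressions \eqref{mu0}, \eqref{mu1}, \eqref{sig} for the final test statistic $\Lambda$ and reduce the blinding condition $\mathcal{D}(\Lambda)=0$ to the single linear identity in the statement. First I would recall that after consensus $\Lambda=\bigl(\sum_{i=1}^N w_iY_i\bigr)/\text{sum}(w)$, where a Byzantine $i\le N_1$ contributes its tampered weight $\tilde w_i$ and tampered value $\tilde Y_i$. Using $\mathbb{E}[Y_i\mid H_0]=M\sigma_i^2$, $\mathbb{E}[Y_i\mid H_1]=(M+\eta_i)\sigma_i^2$, $\mathrm{Var}[Y_i\mid H_0]=2M\sigma_i^4$ (the moments of the scaled central/non‑central $\chi_M^2$), independence of the $Y_i$ across nodes, and the Bernoulli$(P_i)$ attack rule that adds $+\Delta_i$ under $H_0$ and $-\Delta_i$ under $H_1$, one recovers exactly \eqref{mu0}, \eqref{mu1} and \eqref{sig}; since these displays precede the lemma I may simply quote them.

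The key step is the algebraic simplification. Merging the $P_i$ and $(1-P_i)$ terms in \eqref{mu0} collapses the Byzantine contribution to $\tilde w_i\bigl(M\sigma_i^2+P_i\Delta_i\bigr)/\text{sum}(w)$, and likewise in \eqref{mu1} it collapses to $\tilde w_i\bigl((M+\eta_i)\sigma_i^2-P_i\Delta_i\bigr)/\text{sum}(w)$. Subtracting, every $M\sigma_i^2$ term cancels, each honest node contributes $w_i\eta_i\sigma_i^2/\text{sum}(w)$ and each Byzantine contributes $\tilde w_i(\eta_i\sigma_i^2-2P_i\Delta_i)/\text{sum}(w)$, so that
\[
\mu_1-\mu_0=\frac{1}{\text{sum}(w)}\Bigl[\;\sum_{i=N_1+1}^N w_i\eta_i\sigma_i^2-\sum_{i=1}^{N_1}\tilde w_i\,(2P_i\Delta_i-\eta_i\sigma_i^2)\;\Bigr].
\]

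It remains to argue the equivalence $\mathcal{D}(\Lambda)=0\iff\mu_1=\mu_0$. From \eqref{sig} the quantity $\sigma_{(0)}^2=\sum_{i=1}^{N_1}(\tilde w_i/\text{sum}(w))^2[P_i(1-P_i)\Delta_i^2+2M\sigma_i^4]+\sum_{i=N_1+1}^N(w_i/\text{sum}(w))^2\,2M\sigma_i^4$ is finite and strictly positive (the $2M\sigma_i^4$ terms are positive, the $P_i(1-P_i)\Delta_i^2$ terms nonnegative), so $\mathcal{D}(\Lambda)=(\mu_1-\mu_0)^2/\sigma_{(0)}^2$ vanishes exactly when the bracketed expression above is zero. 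Rearranging that equation gives $\sum_{i=1}^{N_1}\tilde w_i(2P_i\Delta_i-\eta_i\sigma_i^2)=\sum_{i=N_1+1}^N w_i\eta_i\sigma_i^2$, which is the claimed condition.

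There is no serious obstacle; the result is a direct consequence of the moment formulas, and the only point needing a word of care is the positivity of $\sigma_{(0)}^2$, which is what legitimizes replacing ``deflection coefficient equals zero'' by ``$\mu_1-\mu_0=0$''. If \eqref{mu0}--\eqref{sig} were to be proved rather than quoted, the mild work would be in the variance: one must use independence of the $Y_i$, independence of the attack flips, and $\mathrm{Var}[\tilde Y_i\mid H_0]=\mathrm{Var}[Y_i\mid H_0]+\mathrm{Var}[\mathbf{1}\{\text{attack}\}\,\Delta_i]=2M\sigma_i^4+P_i(1-P_i)\Delta_i^2$, the cross term vanishing by independence.
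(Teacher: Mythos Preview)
Your proposal is correct and follows essentially the same route as the paper: both arguments quote the moment formulas \eqref{mu0}--\eqref{sig}, compute $\mu_1-\mu_0$, and reduce $\mathcal{D}(\Lambda)=0$ to the linear identity in the statement. If anything, you are slightly more careful than the paper in making explicit that $\sigma_{(0)}^2>0$ is what justifies the equivalence $\mathcal{D}(\Lambda)=0\iff\mu_1=\mu_0$.
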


\begin{proof}
The local test statistic $Y_{i}$ has the mean
\[ mean_{i} = \left\{ \begin{array}{rll}
				M\sigma_{i}^{2}  & \mbox{if}\ H_{0} \\
			   (M+\eta_{i})\sigma_{i}^2 & \mbox{if}\ H_{1}\\
				\end{array}\right. 
\]
and the variance

\[ Var_{i} = \left\{ \begin{array}{rll}
				2M\sigma_{i}^{4}  & \mbox{if}\ H_{0} \\
			   2(M+2\eta_{i})\sigma_{i}^{4} & \mbox{if}\ H_{1}.\\
				\end{array}\right. 
\]

The goal of Byzantine nodes is to make the deflection coefficient as small as possible. Since the Deflection Coefficient is always non-negative; the Byzantines seek to make
$\mathcal{D}(\Lambda)=\dfrac{(\mu_{1}-\mu_{0})^{2}}{\sigma_{(0)}^{2}}=0$. The conditional mean $\mu_{k}=\mathbb{E}[\Lambda|H_k]$ and conditional variance $\sigma_{(0)}^2=\mathbb{E}[(\Lambda-\mu_{0})^2|H_0]$ of the global test statistic, $\Lambda=(\sum_{i=1}^{N_1} \tilde{w}_{i}\tilde{Y_{i}}+\sum_{i=N_1+1}^N w_{i}Y_{i})/(\text{sum}(w))$, can be computed and are given by~\eqref{mu0}, \eqref{mu1} and~\eqref{sig}, respectively. 
After substituting values from \eqref{mu0}, \eqref{mu1} and \eqref{sig}, the condition to make $\mathcal{D}(\Lambda)=0$ becomes 

$$\displaystyle\sum_{i=1}^{N_1} \tilde{w}_i (2P_i \Delta_i-\eta_i\sigma_i^2) =\sum_{i=N_1+1}^N w_i\eta_i\sigma_i^2$$
\end{proof}

\begin{figure}[t] 
  \centering
    \includegraphics[width=0.5\textwidth]{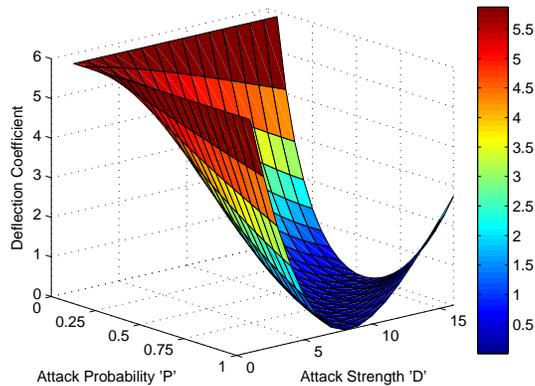}
    \caption{Deflection Coefficient as a function of attack parameters $P$ and $D$.}\label{deflection}
\end{figure}

Note that, when $w_i=\tilde{w}_i=z, \eta_i=\eta,\sigma_i=\sigma,P_i=P,\Delta_i=D,\;\forall i$, the blinding condition simplifies to $\dfrac{N_1}{N}=\dfrac{1}{2}\dfrac{\eta\sigma^2}{PD}$. This condition indicates that by appropriately choosing attack parameters $(P,D)$, an adversary needs less than $50\%$ of sensing data falsifying Byzantines to make the deflection coefficient zero.
 
Next, to gain insights into the solution, we present some numerical results in Figure~\ref{deflection}. We plot the deflection coefficient of global test statistic as a function of attack parameters $P_i=P,\Delta_i=D,\forall i$. 
We consider a $6$-node network with the topology given by the undirected graph shown in Figure~\ref{fig1} to detect a phenomenon.
Nodes $1$ and $2$ are considered to be Byzantines. Channel gains of the nodes are assumed to be $h=[0.8, 0.7, 0.72, 0.61, 0.69, 0.9]$ and weights are given by~\eqref{convweight}.
We also assume that $M=12$, $E_s=5$ and $\sigma_i^2=1$. Notice that, the deflection coefficient is zero when the condition in Lemma~\ref{La} is satisfied. Another observation to make is that the deflection coefficient can be made zero even when only two out of six nodes are Byzantines. Thus,
by appropriately choosing attack parameters $(P,D)$, less than $50\%$ of data falsifying Byzantines are needed to blind the network.

\subsection{Transient Performance Analysis with Byzantines}
\label{secb}
Next, we analyze the detection performance of the data fusion schemes, denoted as $x(t+1)=W^t x(0)$, as a function of consensus iteration $t$ in the presence of Byzantines.
For analytical tractability, we assume that $P_i=P,\;\forall i$. We denote by $w_{ji}^t$ the element of matrix $W^t$ in the $j$th row and $i$th column.
Using these notations, we calculate the probability of detection and the probability of false alarm at the $j$th node at consensus iteration $t$.

For sufficiently large $M$, the distribution of Byzantine's data $\tilde{Y_{i}}$ given $H_{k}$ is a Gaussian mixture which comes from $\mathcal{N}((\mu_{1k})_{i},(\sigma_{1k})_{i}^{2})$ with probability $(1-P)$ and from $\mathcal{N}((\mu_{2k})_{i},(\sigma_{2k})_{i}^{2})$ with probability $P$, where $\mathcal{N}$ denotes the normal distribution and
\begin{equation*}
(\mu_{10})_{i}=M\sigma_{i}^{2},\; (\mu_{20})_{i}=M\sigma_{i}^{2}+\Delta_{i}
\end{equation*}
\begin{equation*}
(\mu_{11})_{i}=(M+\eta_{i})\sigma_{i}^{2},\; (\mu_{21})_{i}=(M+\eta_{i})\sigma_{i}^{2}-\Delta_{i}
\end{equation*}
\begin{equation*}
(\sigma_{10})_{i}^{2}=(\sigma_{20})_{i}^{2}=2M\sigma_{i}^{4},\;\textit{and}\;(\sigma_{11})_{i}^{2}=(\sigma_{21})_{i}^{2}=2(M+\eta_{i})\sigma_{i}^{4}.
\end{equation*}
Now, the probability density function (PDF) of  $x_{ji}^t=w_{ji}^t \tilde{Y_{i}}$ conditioned on $H_{k}$ can be derived as
\begin{equation*}
f(x_{ji}^t|H_{k})=(1-P)\phi(w_{ji}^t(\mu_{1k})_{i},(w_{ji}^t(\sigma_{1k})_{i})^{2})
\end{equation*}
\begin{equation}
\label{weight}
\qquad+P\phi(w_{ji}^t(\mu_{2k})_{i},(w_{ji}^t(\sigma_{2k})_{i})^{2})
\end{equation}
where $\phi(x|\mu,\sigma^{2})$ (for notational convenience denoted as $\phi(\mu,\sigma^{2})$) is the PDF of $X\sim \mathcal{N}(\mu,\sigma^{2})$ and
$\phi(x|\mu,\sigma^{2}) = \frac{1}{{\sigma \sqrt {2\pi } }}e^{{{ - \left( {x - \mu } \right)^2 } \mathord{\left/ {\vphantom {{ - \left( {x - \mu } \right)^2 } {2\sigma ^2 }}} \right. \kern-\nulldelimiterspace} {2\sigma ^2 }}}$.
Next, for clarity of exposition, we first derive our results for a small network with two Byzantine nodes and one honest node. Later we generalize our results for an arbitrary number of nodes, $N$.

Notice that, for the three node case, the transient test statistic $\tilde{\Lambda}_j^t=w_{j1}^t\tilde{Y_1}+w_{j2}^t\tilde{Y_2} + w_{j3}^t Y_3$, is a summation of independent random variables. The conditional PDF of $X_{ji}^t=w_{ji}^t\tilde{Y_i}$ is given in \eqref{weight}. 
Notice that, PDF of $\tilde{\Lambda}_j^t$ is the convolution $(*)$ of $f(x_{j1}^t)=(1-P)\phi(\mu_1^1,(\sigma_1^1)^2)+P\phi(\mu_1^2,(\sigma_1^2)^2)$,  $f(x_{j2}^t)=(1-P)\phi(\mu_2^1,(\sigma_2^1)^2)+P\phi(\mu_2^2,(\sigma_2^2)^2))$ and $f(x_{j3}^t)=\phi(\mu_3^1,(\sigma_3^1)^2)$.
\begin{equation*}
f(z_j^t)=f(x_{j1}^t)*f(x_{j2}^t)*f(x_{j3}^t)
\end{equation*} 
\begin{equation*}
f(z_j^t)=[(1-P)\phi(\mu_1^1,(\sigma_1^1)^2)+P\phi(\mu_1^2,(\sigma_1^2)^2)]*
\end{equation*}
\begin{equation*}
[(1-P)\phi(\mu_2^1,(\sigma_2^1)^2)+P\phi(\mu_2^2,(\sigma_2^2)^2)]*\phi(\mu_3^1,(\sigma_3^1)^2)
\end{equation*}
\begin{equation*}
=(1-P)^2[\phi(\mu_1^1,(\sigma_1^1)^2)*\phi(\mu_2^1,(\sigma_2^1)^2)*\phi(\mu_3^1,(\sigma_3^1)^2)]
\end{equation*}
\begin{equation*}
+(P)^2[\phi(\mu_1^2,(\sigma_1^2)^2)*\phi(\mu_2^2,(\sigma_2^2)^2))*\phi(\mu_3^1,(\sigma_3^1)^2)]
\end{equation*}
\begin{equation*}
+P(1-P)[\phi(\mu_1^2,(\sigma_1^2)^2)*\phi(\mu_2^1,(\sigma_2^1)^2)*\phi(\mu_3^1,(\sigma_3^1)^2)]
\end{equation*}
\begin{equation*}
+(1-P)P[\phi(\mu_1^1,(\sigma_1^1)^2)*\phi(\mu_2^2,(\sigma_2^2)^2)*\phi(\mu_3^1,(\sigma_3^1)^2)]
\end{equation*}
Now, using the fact that convolution of two normal PDFs $\phi(\mu_i,\sigma_i^{2})$ and $\phi(\mu_j,\sigma_j^{2})$ is again normally distributed with mean $(\mu_i+\mu_j)$ and variance $(\sigma_i^{2}+\sigma_j^{2})$, we can derive the results below.
\begin{equation*}
f(z_j^t)=(1-P)^2[\phi(\mu_1^1+\mu_2^1+\mu_3^1,(\sigma_1^1)^2+(\sigma_2^1)^2+(\sigma_3^1)^2)]
\end{equation*}
\begin{equation*}
+P^2[\phi(\mu_1^2+\mu_2^2+\mu_3^1,(\sigma_1^2)^2+(\sigma_2^2)^2+(\sigma_3^1)^2)]
\end{equation*}
\begin{equation*}
+P(1-P)[\phi(\mu_1^2+\mu_2^1+\mu_3^1,(\sigma_1^2)^2+(\sigma_2^1)^2+(\sigma_3^1)^2)]
\end{equation*}
\begin{equation*}
+(1-P)P[\phi(\mu_1^1+\mu_2^2+\mu_3^1,(\sigma_1^1)^2+(\sigma_2^2)^2+(\sigma_3^1)^2)].
\end{equation*}
Due to the probabilistic nature of the Byzantine's behavior, it may behave as an honest node with a probability $(1-P_i)$. Let $S$ denote the set of all combinations of such Byzantine strategies:
\begin{equation}
\label{eqI}
 S=\{\{b_{1},b_{2}\},\{h_{1},b_{2}\},\{b_{1},h_{2}\},\{h_{1},h_{2}\}\}
\end{equation}
where by $b_i$ we mean that Byzantine node $i$ behaves as a Byzantine and by $h_i$ we mean that Byzantine node $i$  behaves as an honest node. Let $A_s \in U$ denote the indices of honest nodes in the strategy combination $s$, then, from~\eqref{eqI} we have
\begin{equation*}
 U=\{A_{1}=\{\},A_{2}=\{1\},A_{3}=\{2\},A_{4}=\{1,2\}\}
\end{equation*} 
\begin{equation*}
 U^c=\{A_{1}^c=\{1,2\},A_{2}^c=\{2\},A_{3}^c=\{1\},A_{4}^c=\{\}\}
\end{equation*}
where $\{\}$ is used to denote the null set and $m_s$ to denote the cardinality of subset $A_{s}\in U$. Using these notations, we generalize our results for any arbitrary $N$.  
\begin{lemma}
The test statistic of node $j$ at consensus iteration $t$, i.e., $\tilde{\Lambda}_j^t=\sum_{i=1}^{N_1}w_{ji}^t\tilde{Y_{i}}+\sum_{i=N_1+1}^{N}w_{ji}^t Y_{i}$ is a Gaussian mixture with PDF
\begin{eqnarray*}
f(\tilde{\Lambda}_j^t|H_{k})&=& \sum_{A_{s}\in U}P^{N_1-m_s}(1-P)^{m_s}\phi\left((\mu_{k})_{A_{s}}+\sum_{i=N_1+1}^{N}w_{ji}^t(\mu_{1k})_{i},\sum_{i=1}^{N}(w_{ji}^t(\sigma_{1k})_{i})^{2})\right)
\end{eqnarray*}
\begin{center}
with $(\mu_{k})_{A_{s}}=\underset{u\in A_{s}}{\sum} w_{ju}^t(\mu_{1k})_{j}+\underset{u\in A_{s}^{c}}{\sum} w_{ju}^t(\mu_{2k})_{j}$.
\end{center}
\end{lemma}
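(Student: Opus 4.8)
The plan is to proceed by a direct expansion of an $N$-fold convolution, generalizing the explicit three-node computation displayed above. The starting point is that $\tilde{\Lambda}_j^t$ is a sum of $N$ mutually independent random variables: each Byzantine $i\in\{1,\dots,N_1\}$ contributes $X_{ji}^t=w_{ji}^t\tilde{Y_i}$, whose conditional density under $H_k$ is the two-component Gaussian mixture in~\eqref{weight}, while each honest node $i\in\{N_1+1,\dots,N\}$ contributes $w_{ji}^t Y_i$, whose conditional density (for large $M$, via the same normal approximation already invoked) is the single Gaussian $\phi(w_{ji}^t(\mu_{1k})_i,(w_{ji}^t(\sigma_{1k})_i)^2)$. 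Hence $f(\tilde{\Lambda}_j^t|H_k)$ is the convolution of these $N$ densities, and the task is to put this convolution into mixture form.

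Next I would expand the convolution of the $N_1$ Byzantine mixtures. Writing each Byzantine density as $(1-P)\,\phi_i^{(1)}+P\,\phi_i^{(2)}$, where $\phi_i^{(1)}$ is the "honest-behavior" component and $\phi_i^{(2)}$ the "attacking" component, and using bilinearity of convolution, the $N_1$-fold convolution becomes a sum of $2^{N_1}$ terms — one for each way of choosing, for every Byzantine $i$, either $\phi_i^{(1)}$ (carrying factor $1-P$) or $\phi_i^{(2)}$ (carrying factor $P$). Each choice is exactly indexed by the set $A_s\in U$ of Byzantines contributing their honest component, with $A_s^c$ contributing the attacking component; the corresponding scalar coefficient is $(1-P)^{m_s}P^{\,N_1-m_s}$ with $m_s=|A_s|$. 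Convolving in the honest nodes' single Gaussians does not split any term further, so there remain exactly $|U|=2^{N_1}$ terms.

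Then I would collapse each of these $2^{N_1}$ convolutions of Gaussians into a single Gaussian using the stated fact that $\phi(\mu_a,\sigma_a^2)*\phi(\mu_b,\sigma_b^2)=\phi(\mu_a+\mu_b,\sigma_a^2+\sigma_b^2)$. The mean of the term indexed by $A_s$ is the sum of per-node means: $w_{ju}^t(\mu_{1k})_u$ for $u\in A_s$, $w_{ju}^t(\mu_{2k})_u$ for $u\in A_s^c$, and $w_{ji}^t(\mu_{1k})_i$ for honest $i$; the first two groups are precisely $(\mu_k)_{A_s}$ and the third is $\sum_{i=N_1+1}^{N}w_{ji}^t(\mu_{1k})_i$. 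For the variance, the crucial observation is that $(\sigma_{1k})_i^2=(\sigma_{2k})_i^2$ for every $i$, so node $i$ contributes variance $(w_{ji}^t(\sigma_{1k})_i)^2$ regardless of which mixture component it takes; summing over all $i$ yields the $s$-independent variance $\sum_{i=1}^{N}(w_{ji}^t(\sigma_{1k})_i)^2$. Assembling coefficient, mean, and variance for each $A_s$ and summing over $A_s\in U$ gives the claimed expression.

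The bilinear bookkeeping of the $2^{N_1}$-term expansion is routine; the one point I would take care to state explicitly — and the reason the final formula is clean — is that the equality of the two component variances of each Byzantine's mixture decouples the covariance from the choice of $s$, so that all $2^{N_1}$ mixture components share a common variance and only the means vary. (Notationally, the summation index on $(\mu_{1k})$ and $(\mu_{2k})$ in the definition of $(\mu_k)_{A_s}$ should read $u$, matching the summation variable.)
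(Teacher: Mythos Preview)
Your proposal is correct and takes essentially the same approach as the paper: the paper does not supply a separate proof of the general lemma but instead works out the $N_1=2$, $N=3$ case explicitly via the bilinear expansion of the convolution of mixtures and the Gaussian convolution identity, and then states the general result; your argument is precisely this computation carried out for arbitrary $N_1$ and $N$. Your explicit remark that $(\sigma_{1k})_i^2=(\sigma_{2k})_i^2$ forces a common variance across all $2^{N_1}$ components is a useful clarification that the paper leaves implicit, and your observation about the index $u$ versus $j$ in the definition of $(\mu_k)_{A_s}$ is well taken.
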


The performance of the detection scheme in the presence of Byzantines can be represented in terms of the probability of detection and the probability of false alarm of the network. 

\begin{proposition}
\label{propo1}
The probability of detection and the probability of false alarm of node $j$ at consensus iteration $t$ in the presence of Byzantines can be represented as
\begin{footnotesize}
\begin{equation*}
P_{d}^t(j)=\sum_{A_{s}\in U}P^{N_1-m_s}(1-P)^{m_s} Q\left(\frac{\lambda-(\mu_{1})_{A_{s}}-\sum_{i=N_1+1}^{N}w_{ji}^t(\mu_{11})_{i}}{\sqrt{\sum_{i=1}^{N}(w_{ji}^t(\sigma_{11})_{i})^{2})}}\right),
\end{equation*}
\begin{equation*}
P_{f}^t(j)=\sum_{A_{s}\in U}P^{N_1-m_s}(1-P)^{m_s} Q\left(\frac{\lambda-(\mu_{0})_{A_{s}}-\sum_{i=N_1+1}^{N}w_{ji}^t(\mu_{10})_{i}}{\sqrt{\sum_{i=1}^{N}(w_{ji}^t(\sigma_{10})_{i})^{2})}}\right).
\end{equation*}
\end{footnotesize}
\end{proposition}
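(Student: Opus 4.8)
The plan is to recognize that at consensus iteration $t$ node $j$ declares $H_1$ exactly when its current state $\tilde{\Lambda}_j^t=\sum_{i=1}^{N_1}w_{ji}^t\tilde{Y_i}+\sum_{i=N_1+1}^{N}w_{ji}^t Y_i$ exceeds the threshold $\lambda$, i.e.\ the decision-making rule of the steady-state phase applied to the transient statistic. Consequently $P_d^t(j)=\Pr(\tilde{\Lambda}_j^t>\lambda\mid H_1)$ and $P_f^t(j)=\Pr(\tilde{\Lambda}_j^t>\lambda\mid H_0)$, so the entire task reduces to computing the upper-tail probability of $\tilde{\Lambda}_j^t$ under each hypothesis. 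By the preceding Lemma this density is the explicit Gaussian mixture $f(\tilde{\Lambda}_j^t\mid H_k)=\sum_{A_s\in U}P^{N_1-m_s}(1-P)^{m_s}\,\phi\!\big((\mu_k)_{A_s}+\sum_{i=N_1+1}^{N}w_{ji}^t(\mu_{1k})_i,\ \sum_{i=1}^{N}(w_{ji}^t(\sigma_{1k})_i)^2\big)$, so it suffices to integrate this mixture over $(\lambda,\infty)$.

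Next I would integrate term by term. Writing $Q(\cdot)$ for the standard Gaussian complementary CDF, a single component with mean $m$ and variance $v$ satisfies $\int_\lambda^\infty\phi(x\mid m,v)\,dx=Q\!\big((\lambda-m)/\sqrt{v}\big)$ via the substitution $u=(x-m)/\sqrt{v}$. Since the mixture weights $P^{N_1-m_s}(1-P)^{m_s}$ are nonnegative constants and integration is linear, integrating the Lemma's density over $(\lambda,\infty)$ gives $\sum_{A_s\in U}P^{N_1-m_s}(1-P)^{m_s}\,Q\!\Big(\tfrac{\lambda-(\mu_k)_{A_s}-\sum_{i=N_1+1}^{N}w_{ji}^t(\mu_{1k})_i}{\sqrt{\sum_{i=1}^{N}(w_{ji}^t(\sigma_{1k})_i)^2}}\Big)$. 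Setting $k=1$ (so that the component means are $(\mu_{11})_i$ and variances $(\sigma_{11})_i^2=2(M+\eta_i)\sigma_i^4$) yields $P_d^t(j)$, and setting $k=0$ (means $(\mu_{10})_i$, variances $(\sigma_{10})_i^2=2M\sigma_i^4$) yields $P_f^t(j)$; these are precisely the two displayed expressions.

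The substantive work is inherited from the Lemma rather than from the Proposition itself: obtaining the mixture form requires the large-$M$ Gaussian approximation of the (non-)central chi-square summary statistics, the independence of the $\tilde{Y_i}$ across nodes so that $\tilde{\Lambda}_j^t$ is a convolution of the per-node densities in~\eqref{weight}, and the combinatorial enumeration over the $2^{N_1}$ Byzantine behavior patterns encoded by $S$ and the index sets $A_s$. Granting the Lemma, the only remaining checks for the Proposition are bookkeeping: that $\sum_{A_s\in U}P^{N_1-m_s}(1-P)^{m_s}=(P+(1-P))^{N_1}=1$, so the two sums of $Q$-values are genuine probabilities in $[0,1]$, and that for $N_1=2$ the four terms collapse to the three-node convolution computed just before the Lemma. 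Hence I anticipate no real obstacle in the Proposition's proof; the delicate modeling step—the Gaussian approximation together with the mixture expansion over attacker realizations—has already been absorbed into the Lemma.
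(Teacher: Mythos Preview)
Your proposal is correct and matches the paper's approach: the paper does not give an explicit proof of the Proposition at all, treating it as an immediate consequence of the preceding Lemma, which is exactly the reduction you describe. Your term-by-term integration of the Gaussian mixture density over $(\lambda,\infty)$ to obtain the weighted sum of $Q$-functions is precisely the (implicit) step the paper relies on.
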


\newtheorem{remark}{Remark} 
\begin{remark}
Notice that, the expressions of probability of detection $P_{d}^t(j)$ and probability of false alarm $P_{f}^t(j)$ for the $N_1$ Byzantine node case involves $2^{N_1}$ combinations (cardinality of $U$ is $2^{N_1}$). It, however, can be represented compactly by vectorizing the expressions, i.e., 
\begin{equation*}
P_d^t(j)=\mathbf{1^T}\left(\mathbf{b}\otimes Q\left(\frac{\mbox{$\lambda \mathbf{1}$-\boldmath{$\mu_1$}}}{\sqrt{\sum_{i=1}^{N}(w_{i}(\sigma_{10})_{i})^{2})}}\right) \right)
\end{equation*} 
with $\mbox{\boldmath$\mu_1$}=[A\mbox{\boldmath$w_j^t\mu_{11}$}+A^c\mbox{\boldmath$w_j^t\mu_{21}$}-\sum\limits_{i=N_1+1}^{N}w_{ji}^t(\mu_{11})_{i}]$, $B=(1-P)A+P A^c$ and $\mathbf{b}=[\mathbf{B^1}\otimes\cdots\otimes\mathbf{B^{N_1}}]$,
where boldface letters represent vectors, $\otimes$ symbol represents element-wise multiplication, $Q(\cdot)$ represents element wise Q function operation, i.e., $Q(x_1,\cdots,x_{N_1})=[Q(x_1),\cdots,Q(x_{N_1})]^T$, $\mathbf{B^i}$ is $i$th column of matrix $B$, $\mbox{\boldmath$w_j^t\mu_{u1}$}=[w_{j1}^t \mu_{u1}^1,\cdots,w_{jN_1}^t \mu_{u1}^{N_1}]^T$, matrix $A_{(2^{N_1}*{N_1})}$ is the binary representation of decimal numbers from $0$ to ${N_1}-1$ and $A^c$ is the matrix after interchanging $1$ and $0$ in matrix $A$.

Similarly, the expression for the probability of false alarm $P_{f}^t(j)$ can be vectorized into a compact form.
\end{remark}

Next, to gain insights into the results given in Proposition~\ref{propo1}, we present some numerical results in Figures~\ref{Pd} and \ref{Pf}.
We consider the $6$-node network shown in Figure~\ref{fig1} where the nodes employ the consensus algorithm~\ref{consensusalgo} with $\epsilon=0.6897$ to detect a phenomenon.
Nodes $1$ and $2$ are considered to be Byzantines.
We also assume that $\eta_i=10$, $\sigma_i^2=2$,  $\lambda=33$ and $w_i=1$. Attack parameters are assumed to be $(P_i,\Delta_i)=(0.5,6)$ and $\tilde{w}_i=1.1$.
To characterize the transient performance
of the weighted average consensus algorithm, in Figure~\ref{pd}, we plot the probability of detection as a function of the number of consensus iterations when Byzantines are not falsifying their data, i.e., $(\Delta_i=0,\tilde{w}_i=w_i)$. Next, in Figure~\ref{pdb}, we plot the probability of detection as a function of the number of consensus iterations in the presence of Byzantines. It can be seen that the detection performance degrades in the presence of Byzantines.
In Figure~\ref{pf}, we plot the probability of false alarm as a function of
the number of consensus iterations when Byzantines are not falsifying their data, i.e., $(\Delta_i=0,\tilde{w}_i=w_i)$.
Next, in Figure~\ref{pfb}, we plot the probability of false alarm as a function of the number of consensus iterations in the presence of Byzantines.
From both Figures~\ref{Pd} and \ref{Pf}, it can be seen  that the Byzantine attack can severely degrade transient detection performance. 

\begin{figure*}[t]
\centering
\subfigure[]{
\includegraphics[height=0.2\textheight, width=0.42\textwidth,clip=true]{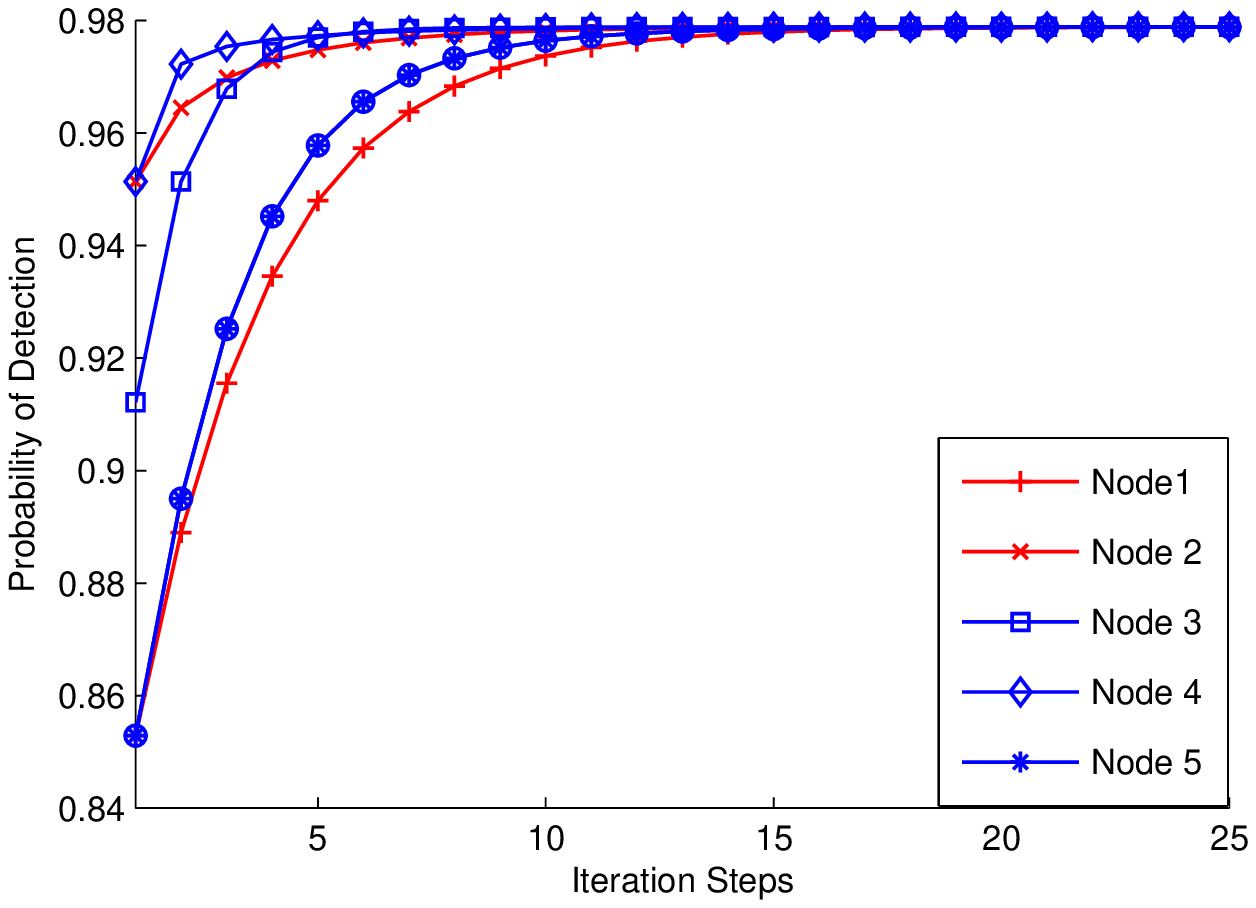}
\label{pd}}
\hspace{0.1in}
\subfigure[] {
\includegraphics[height=0.2\textheight, width=0.42\textwidth,clip=true]{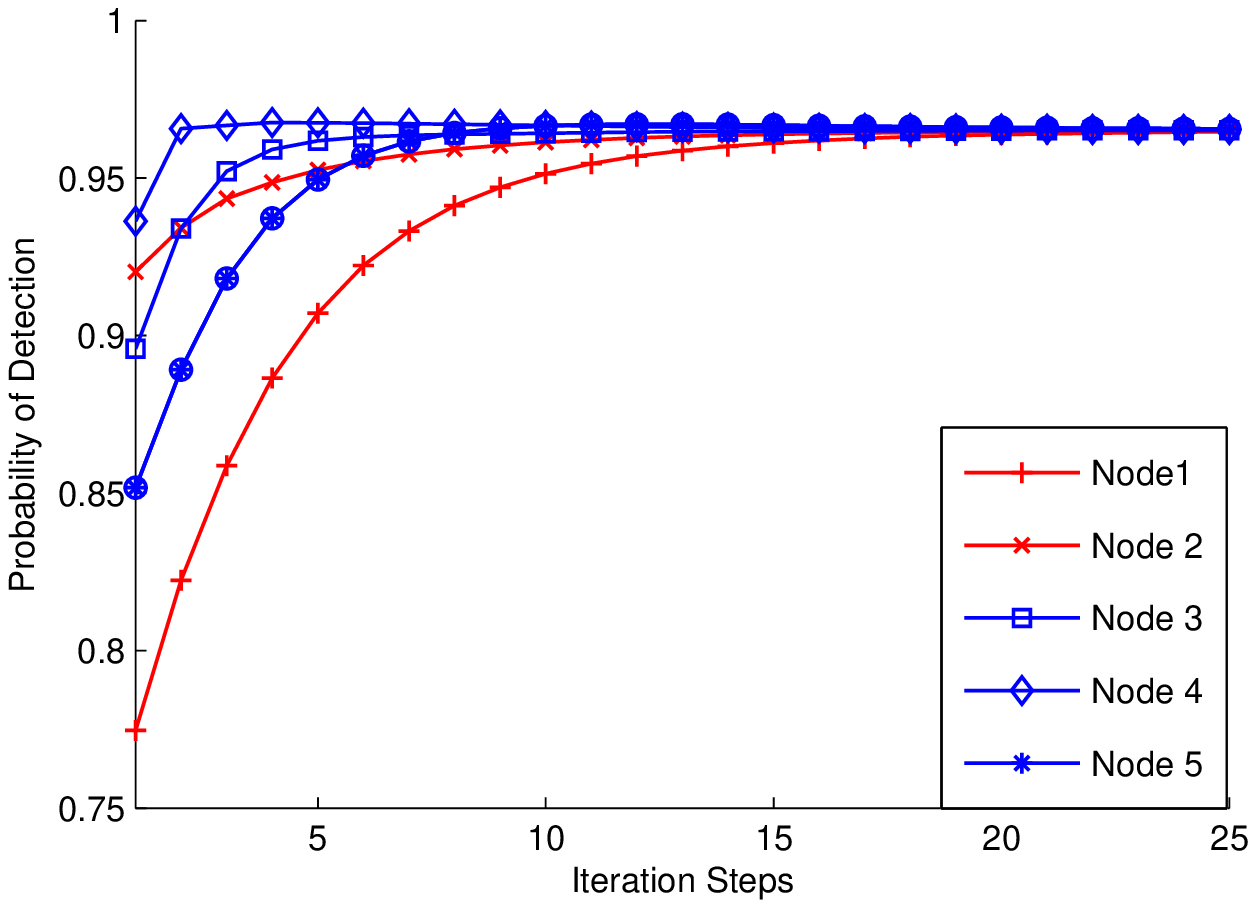}
\label{pdb} }
\caption{\subref{pd} Probability of detection as a function of consensus iteration steps.  \subref{pdb} Probability of detection as a function of consensus iteration steps with Byzantines.}
\label{Pd}
\end{figure*}
\begin{figure*}[t]
\centering
\subfigure[]{
\includegraphics[height=0.2\textheight, width=0.42\textwidth,clip=true]{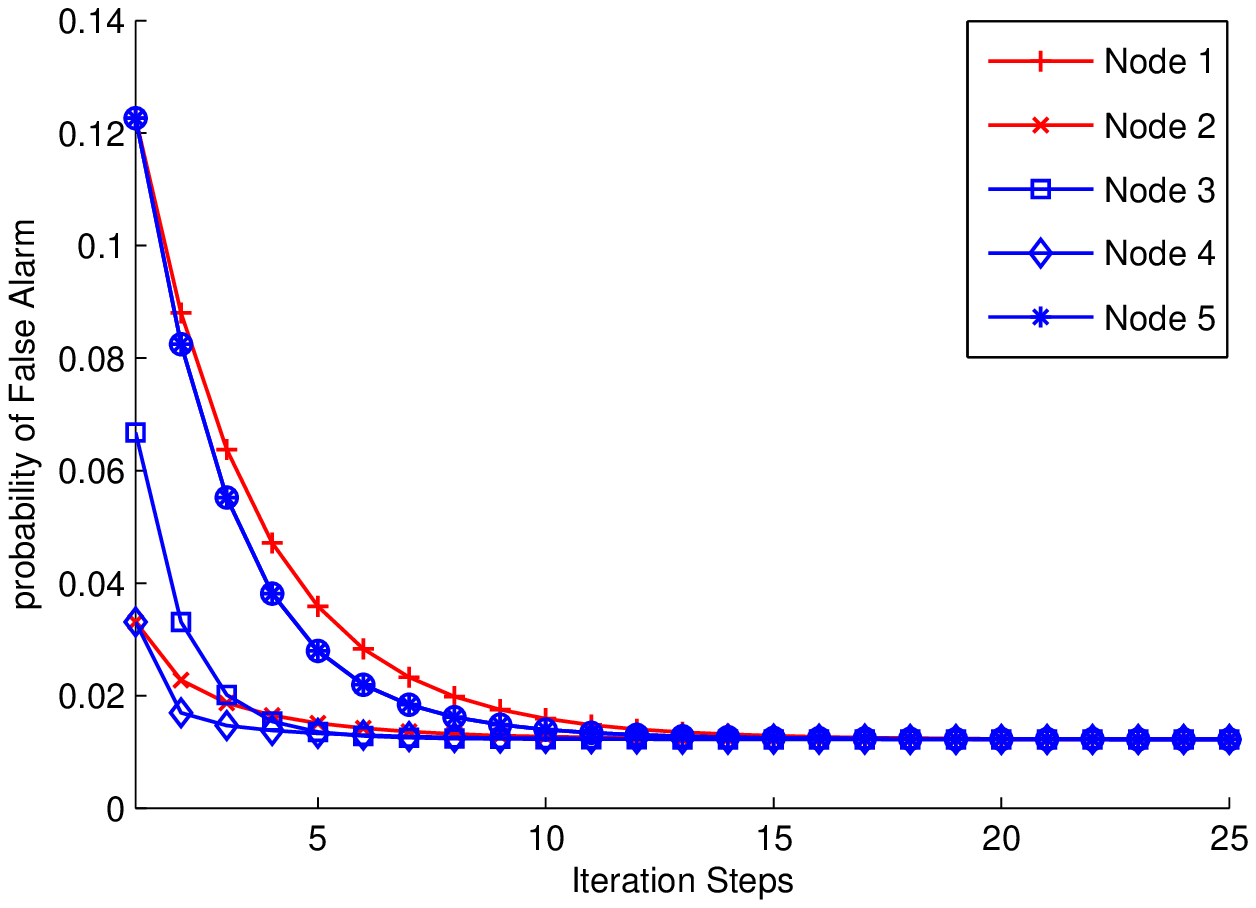}
\label{pf}}
\hspace{0.1in}
\subfigure[] {
\includegraphics[height=0.2\textheight, width=0.42\textwidth,clip=true]{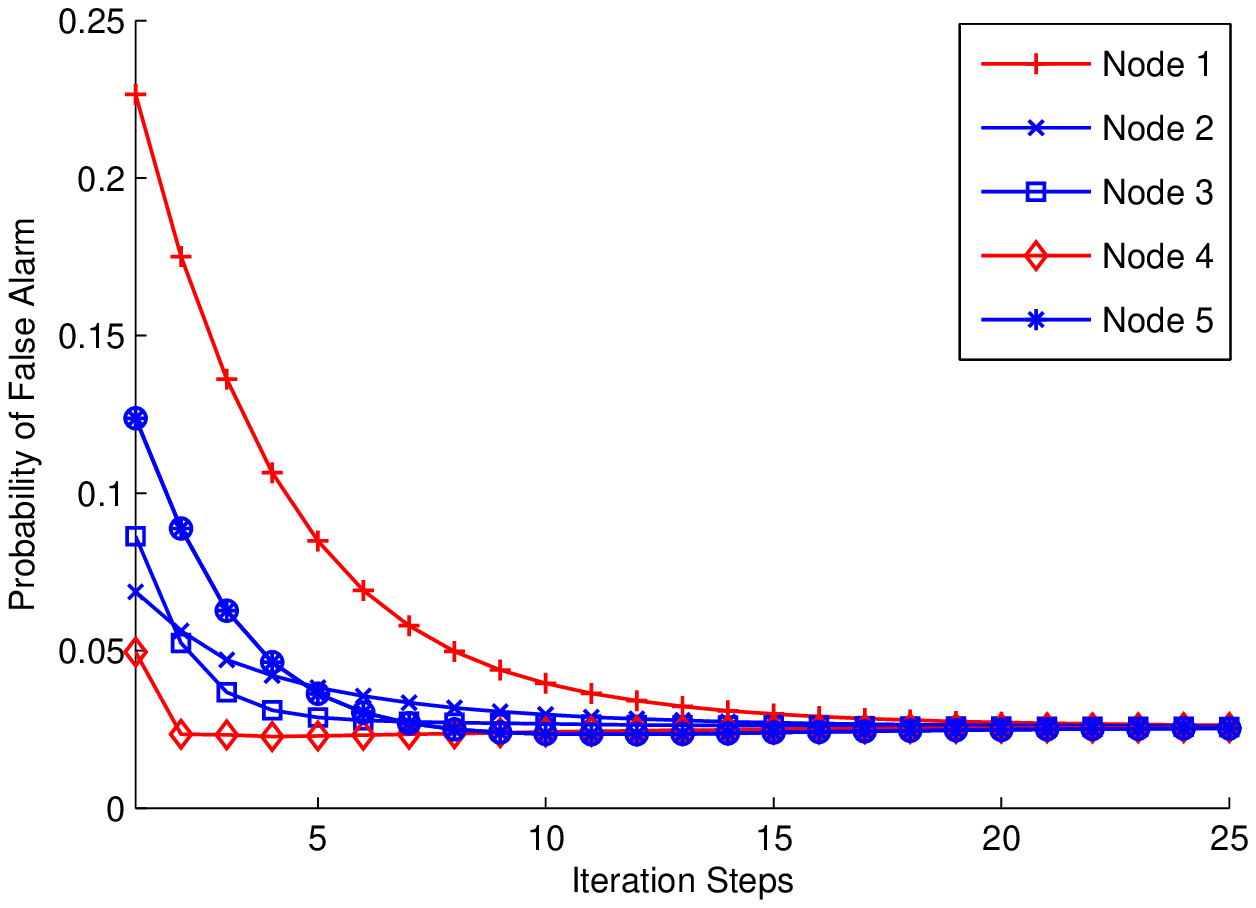}
\label{pfb} }
\caption{\subref{pf} Probability of false alarm as a function of consensus iteration steps.  \subref{pfb} Probability of false alarm as a function of consensus iteration steps with Byzantines.}
\label{Pf}
\end{figure*}

From the discussion in this section, we can see that Byzantines can severely degrade both the steady-state and the transient detection performance of conventional consensus based detection algorithms. As mentioned earlier, a data falsifying Byzantine $i$ can tamper its weight $w_i$ as well as its sensing data $Y_i$ to degrade detection performance. One approach to mitigate the effect of sensing data falsification is to assign weights based on the quality of the data. In other words, lower weight is assigned to the data of the node identified as a Byzantine. However, to implement this approach one has to address the following two issues.

First, in the conventional weighted average consensus algorithm, weight $w_{i}$ given to node $i$'s data is assigned by the node itself. 
Thus, a Byzantine node can always set a higher weight to its manipulated information and the final statistics will be dominated by the Byzantine nodes' local statistic that will lead to degraded detection performance. 
It will be impossible for any algorithm to detect this type of malicious behavior, since any weight
that a Byzantine chooses for itself is a legitimate value that could also have been chosen by a node that is functioning correctly. Thus, the conventional consensus algorithm cannot be used in the presence of an attacker. 

Second, as will be seen later, the optimal weights assigned to nodes' sensing data depend on the following unknown parameters: identity of the nodes (i.e., honest or Byzantine) and underlying statistical distribution of the nodes' data. 

In the next section, we address these concerns by proposing a learning based robust weighted average consensus algorithm. 

\section{A Robust Consensus Based Detection Algorithm}
\label{protection}
In order to address the first issue, we propose a consensus algorithm in which the weight for node $i$'s information is assigned (or updated) by neighbors of the node $i$ rather than by node $i$ itself. 
Note that, networks deploying such an algorithm is more robust to weight manipulation because if a Byzantine node $j$ wants to lower the weight assigned to the data of its honest neighbor $i$ in the global test statistic, it has to make sure that a majority of the neighbors of $i$ put the same lower weight as $j$. In other words, every honest node should have majority of its neighbors that are Byzantines, otherwise, it can be treated as a consensus disruption attack and Byzantines can be easily identified detected by techniques such as those given in~\cite{sundar,paper5}.

\subsection{Distributed Algorithm for Weighted Average Consensus} 
\label{consensus}
In this section, we address the following questions: does there exist a distributed algorithm that solves the weighted average consensus problem while satisfying the condition that weights must be assigned or updated by neighbors $\mathcal{N}_{i}$ of the node $i$ rather than by the node $i$ itself? If it exists, then, under what conditions or constraints does the algorithm converge?  

We consider a network with $N$ nodes with a fixed  and connected topology $G(V,E)$. Next, we state Perron-Frobenius theorem~\cite{horn}, which will be used later for the design and analysis of our robust weighted average consensus algorithm.
\begin{theorem}[\cite{horn}]
Let $W$ be a primitive nonnegative matrix with left and right eigenvectors $u$ and $v$, respectively, satisfying $Wv=v$ and $u^TW=u^T$. Then, $\lim_{k\to\infty}W^k=\frac{vu^T}{v^Tu}$.
\end{theorem}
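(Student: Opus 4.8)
The plan is to reduce the statement to the spectral structure of a primitive nonnegative matrix and then read off the limit from the Jordan canonical form. First I would invoke the classical Perron--Frobenius theorem for primitive matrices (the spectral part, as in~\cite{horn}): since $W$ is nonnegative and primitive with $Wv=v$ for a positive $v$, its spectral radius is $\rho(W)=1$, the eigenvalue $1$ is algebraically simple, and every other eigenvalue $\lambda$ of $W$ satisfies $|\lambda|<1$. Establishing this strict spectral gap is the genuine substance of the result and the main obstacle; everything after it is bookkeeping. (If one wished to argue it from scratch rather than cite it, the standard route is to first treat the irreducible case via a cone/fixed-point argument producing the positive eigenvector, and then use primitivity, i.e. $W^m>0$ for some $m$, to exclude other eigenvalues of modulus $1$ through a strict-inequality refinement of the triangle inequality applied to $W^m$.)

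Granting the spectral facts, I would write $W=SJS^{-1}$ in Jordan form, ordering the blocks so that the leading $1\times 1$ block is the eigenvalue $1$, and let $e_1$ denote the first standard basis vector. Then $W^k=SJ^kS^{-1}$, and since every Jordan block $J_m(\lambda)$ with $|\lambda|<1$ satisfies $J_m(\lambda)^k\to 0$ as $k\to\infty$ (each entry is a fixed polynomial in $k$ times a power of $\lambda$, which tends to $0$), we get $J^k\to e_1e_1^T$ and hence $W^k\to P:=S\,e_1e_1^T\,S^{-1}$. In particular the limit exists and is a matrix of rank one.

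Finally I would pin down $P$ using only its defining relations rather than the explicit columns of $S$. From $Wv=v$ we get $W^kv=v$ for all $k$, hence $Pv=v$; from $u^TW=u^T$ we get $u^TW^k=u^T$, hence $u^TP=u^T$. Because $P$ has rank one and $Pv=v\neq 0$, its column space is $\mathrm{span}(v)$, so $P=vz^T$ for some row vector $z^T$. Substituting into $u^TP=u^T$ gives $(u^Tv)\,z^T=u^T$, i.e. $z^T=u^T/(u^Tv)$, where $u^Tv=v^Tu\neq 0$ since by Perron--Frobenius both $u$ and $v$ can be taken strictly positive. Therefore $P=\dfrac{vu^T}{v^Tu}$, as claimed. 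As a consistency check, the formula gives $P^2=P$ and $WP=PW=P$, matching the interpretation of $P$ as the spectral projector onto the eigenspace of the eigenvalue $1$.
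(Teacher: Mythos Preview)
Your argument is correct and follows the standard route: invoke the spectral part of Perron--Frobenius to get a simple eigenvalue $1$ with a strict spectral gap, pass to Jordan form so that $W^k$ converges to a rank-one projector, and then identify that projector from the relations $Pv=v$ and $u^TP=u^T$. One small point worth tightening: the theorem as stated does not assert that $v$ is positive, only that $Wv=v$. You silently assume positivity when concluding $\rho(W)=1$; the cleaner justification is that if $1$ were not the Perron root then $W^k$ would fail to converge, so $1$ must equal $\rho(W)$ and the Perron theory then forces $v$ (and $u$) to be, up to sign, strictly positive, whence $v^Tu\neq 0$.

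As for comparison with the paper: there is nothing to compare. The paper does not prove this theorem at all; it is stated with a citation to~\cite{horn} and then used as a black box to establish convergence of the proposed weighted-average consensus iteration (Theorem~\ref{th1}). You have supplied a proof where the authors chose simply to quote the result.
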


Using the above theorem, we take a reverse-engineering approach to design a modified Perron matrix $\hat{W}$ which has the weight vector $w=[w_{1},w_{2},\cdots,w_{N}]^{T}$, $w_{i}>0$,  $\forall i$ as its left eigenvector and $\vec{1}$ as its right eigenvector corresponding to eigenvalue $1$. From the above theorem, if the modified Perron matrix $\hat{W}$ is primitive and nonnegative, then, a weighted average consensus can be achieved. 
Now, the problem boils down to designing such a $\hat{W}$ which meets our requirement that weights are assigned or updated by the neighbors $\mathcal{N}_{i}$ of node $i$ rather than by node $i$ itself. 

Next, we propose a modified Perron matrix ${\hat{W}}=I-\epsilon (T\otimes L)$ where $L$ is the original graph Laplacian, $\otimes$ is element-wise matrix multiplication operator, and $T$ is a transformation given by
\[ [T]_{ij} = \left\{ \begin{array}{rll}
			
				\frac{{\underset{j \in \mathcal{N}_{i}}{\sum}} w_{j}}{l_{ii}}  & \mbox{if} \; {i=j}\\
				
				w_{j}\;\;\;\;\;\;  &\text{otherwise}.
				\end{array}\right.  
\]
Observe that, the above transformation $T$ satisfies the condition that weights are assigned or updated by neighbors $\mathcal{N}_{i}$ of node $i$ rather than by node $i$ itself. 
Based on the above transformation $T$, we propose our  distributed consensus algorithm:
\begin{center}
$x_{i}(k+1)=x_{i}(k)+{\epsilon}\underset{j\in \mathcal{N}_{i}}{\sum}w_{j}(x_{j}(k)-x_{i}(k))$. 
\end{center}
Let us denote the modified Perron matrix by $\hat{W}=I-\epsilon \hat{L}$. 

Next, we explore the properties of the modified Perron matrix $\hat{W}$ and show that it satisfies the requirements of the Perron-Frobenius theorem~\cite{horn}. These properties will later be utilized to prove the convergence of our proposed consensus algorithm.

\begin{lemma}
Let $G$ be a connected graph with $N$ nodes. Then, the modified Perron matrix ${\hat{W}}=I-\epsilon (T\otimes L)$, with $0<\epsilon<\frac{1}{\displaystyle\max_{i}({\underset{j \in \mathcal{N}_{i}}{\sum}} w_{j})}$ satisfies the following properties.
\begin{enumerate}
\item $\hat{W}$ is a nonnegative matrix with left eigenvector $w$ and right eigenvector $\vec{1}$ corresponding to eigenvalue $1$;
\item All eigenvalues of $\hat{W}$ are in a unit circle;
\item $\hat{W}$ is a primitive matrix\footnote{A matrix is primitive if it is non-negative and its $m$th power is positive for some natural number $m$.}.
\end{enumerate}
\end{lemma}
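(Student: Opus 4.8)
\emph{Proof plan.} The plan is to first write down the entries of $\hat{L}:=T\otimes L$ in closed form, and then verify the three listed properties in order, invoking only Gershgorin's disc theorem and the standard irreducibility/aperiodicity characterization of primitive matrices. Since $\otimes$ is the entrywise product, $[\hat{L}]_{ij}=[T]_{ij}\,l_{ij}$. On the diagonal the factor $l_{ii}$ cancels, giving $[\hat{L}]_{ii}=\sum_{j\in\mathcal{N}_i}w_j$, while off the diagonal $[\hat{L}]_{ij}=w_j(-a_{ij})=-w_j a_{ij}$. Hence $\hat{W}=I-\epsilon\hat{L}$ has $[\hat{W}]_{ii}=1-\epsilon\sum_{j\in\mathcal{N}_i}w_j$ and $[\hat{W}]_{ij}=\epsilon\, w_j a_{ij}$ for $i\neq j$. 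All three properties follow from manipulating these two expressions; I would emphasize that the cancellation of $l_{ii}$ in the diagonal of $T$ is precisely what is engineered so that $w$ becomes a left eigenvector.

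\textbf{Property 1.} The row sums of $\hat{L}$ vanish because $\sum_j[\hat{L}]_{ij}=\sum_{j\in\mathcal{N}_i}w_j-\sum_{j\neq i}w_j a_{ij}=0$ (as $a_{ij}=1$ exactly when $j\in\mathcal{N}_i$), so $\hat{L}\vec{1}=\vec{0}$ and $\hat{W}\vec{1}=\vec{1}$. For the left eigenvector I would compute the $j$th entry of $w^T\hat{L}$: $\sum_i w_i[\hat{L}]_{ij}=w_j\sum_{k\in\mathcal{N}_j}w_k-\sum_{i\in\mathcal{N}_j}w_i w_j=0$, where the undirectedness $a_{ij}=a_{ji}$ is used; hence $w^T\hat{W}=w^T$. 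Nonnegativity is then immediate: the off-diagonal entries $\epsilon w_j a_{ij}$ are nonnegative since $w_j>0$, and the diagonal entries $1-\epsilon\sum_{j\in\mathcal{N}_i}w_j$ are strictly positive precisely because $\epsilon<1/\max_i(\sum_{j\in\mathcal{N}_i}w_j)$.

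\textbf{Properties 2 and 3.} From $\hat{W}\ge 0$ together with $\hat{W}\vec{1}=\vec{1}$, the matrix $\hat{W}$ is row-stochastic; by Gershgorin's disc theorem every eigenvalue lies in a disc centered at some $[\hat{W}]_{ii}\in(0,1]$ of radius $\sum_{j\neq i}[\hat{W}]_{ij}=1-[\hat{W}]_{ii}$, and each such disc is contained in the closed unit disc, which gives Property 2. For Property 3, recall that a nonnegative matrix is primitive iff it is irreducible and aperiodic: irreducibility holds because the off-diagonal zero pattern of $\hat{W}$ coincides with that of the adjacency matrix $A$ (positive off-diagonal entry iff $a_{ij}=1$), and $A$ is irreducible since $G$ is connected; aperiodicity holds because every diagonal entry $[\hat{W}]_{ii}>0$ places a self-loop at each vertex of the digraph associated with $\hat{W}$, forcing the gcd of its cycle lengths to be $1$.

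\textbf{Where the difficulty lies.} The only genuine bookkeeping is the entrywise-product computation and the column-sum identity for the left eigenvector (the latter crucially using $a_{ij}=a_{ji}$); once the closed form of $\hat{W}$ is in hand, Properties 2 and 3 are one-line consequences of standard matrix-analysis facts. So I do not expect a serious obstacle, only the need to state the $l_{ii}$-cancellation in the diagonal of $T$ carefully, since that is the whole point of the construction.
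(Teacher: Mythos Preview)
Your argument is correct and follows essentially the same strategy as the paper: compute the entries of $\hat{L}$, verify the two eigenvector relations directly, use Gershgorin for the spectral bound, and deduce primitivity from irreducibility plus one extra ingredient. The one place you diverge is Property~3: the paper invokes the spectral characterization (an irreducible nonnegative matrix is primitive iff it has a unique eigenvalue of maximum modulus) and then cites an external result to conclude, whereas your self-loop argument---the strictly positive diagonal forces aperiodicity---is more elementary and fully self-contained; this is a cleaner route to the same conclusion.
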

 \begin{proof}
Notice that, $\hat{W}\vec{1}=\vec{1}-{\epsilon}(T\otimes L)\vec{1}=\vec{1}$ and $w^T\hat{W}= w^T-{\epsilon}w^T(T\otimes L) =w^T$. This implies that $\hat{W}$ has left eigenvector $w$ and right eigenvector $\vec{1}$ corresponding to eigenvalue $1$. 
To show that $\hat{W}=I+{\epsilon}T\otimes A-{\epsilon}T\otimes D$ is non-negative, it is sufficient to show that: $w>0$, $\epsilon>0$ and $\epsilon({\max_{i}({\underset{j \in \mathcal{N}_{i}}{\sum}} w_{j})})\leq1, \forall i$. Since $w$ is the left eigenvector of $\hat{L}$ and $w>0$, $\hat{W}$ is non-negative if and only if
\begin{center}
$0<\epsilon\leq\frac{1}{\max_{i}({\underset{j \in \mathcal{N}_{i}}{\sum}} w_{j})}.$ 
\end{center} 

To prove part $2)$, notice that all the eigenvectors of $\hat{W}$ and $\hat{L}$ are the same. Let $\gamma_j$ be
the $j$th eigenvalue of $\hat{L}$, then, the $j$th eigenvalue of $\hat{W}$ is $\lambda_j=1-\epsilon \gamma_j$. 
Now, part $2)$ can be proved by applying Gershgorin theorem~\cite{horn} to the modified Laplacian matrix $\hat{L}$.

To prove part $3)$, note that $G$ is strongly connected and,
therefore, $\hat{W}$ is an irreducible matrix~\cite{horn}. Thus, to prove that $\hat{W}$ is a primitive matrix, it is sufficient\footnote{An irreducible stochastic matrix is primitive if it has only one eigenvalue with maximum modulus.} to show that $\hat{W}$ has a single eigenvalue
with maximum modulus of $1$.
In~\cite{paper2}, the authors showed that when $0<\epsilon<\displaystyle\max_{i}(\underset{j\neq i}\sum{a_{ij}})$, the original Perron matrix $W$ has only one eigenvalue with maximum modulus $1$ at its spectral radius. Using a similar logic, $\hat{W}$ is a primitive matrix if
\begin{center}
$0<\epsilon<\frac{1}{\displaystyle\max_{i}({\underset{j \in \mathcal{N}_{i}}{\sum}} w_{j})}.$   
\end{center}
\end{proof}

\begin{theorem}
\label{th1}
Consider a network with fixed  and strongly connected undirected topology $G(V,E)$ that employs the distributed
consensus algorithm  
\begin{center}
$x_{i}(k+1)=x_{i}(k)+{\epsilon}\underset{j\in \mathcal{N}_{i}}{\sum}w_{j}(x_{j}(k)-x_{i}(k))$ 
\end{center} 
where 
\begin{center}
$0<\epsilon<\frac{1}{\displaystyle\max_{i}({\underset{j \in \mathcal{N}_{i}}{\sum}} w_{j})}.$
\end{center} 
Then, consensus with $x_i^{*}=\frac{\sum_{i=1}^N w_{i}x_i(0)}{\sum_{i=1}^n w_{i}},\forall i$ is reached asymptotically.
\end{theorem}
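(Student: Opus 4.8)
The plan is to cast the node-wise recursion in matrix form and then invoke the Perron--Frobenius theorem stated above, together with the spectral properties of $\hat{W}$ established in the preceding lemma. First I would verify that stacking the scalar updates $x_{i}(k+1)=x_{i}(k)+\epsilon\sum_{j\in\mathcal{N}_{i}}w_{j}(x_{j}(k)-x_{i}(k))$ into the vector $x(k)=[x_{1}(k),\dots,x_{N}(k)]^{T}$ yields exactly $x(k+1)=\hat{W}x(k)$ with $\hat{W}=I-\epsilon(T\otimes L)=I-\epsilon\hat{L}$. This is a bookkeeping check: the $(i,j)$ off-diagonal entry of $\epsilon(T\otimes L)$ equals $-\epsilon w_{j}a_{ij}$, and the $i$th diagonal entry equals $\epsilon\sum_{j\in\mathcal{N}_{i}}w_{j}$, so row $i$ of $\hat{W}x(k)$ reproduces the prescribed local fusion rule in which the weight multiplying $x_{j}(k)$ is $w_{j}$, assigned by the neighbor rather than by node $i$. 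Iterating then gives $x(k)=\hat{W}^{k}x(0)$.

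Next I would apply the lemma. Since $0<\epsilon<1/\max_{i}(\sum_{j\in\mathcal{N}_{i}}w_{j})$, the matrix $\hat{W}$ is nonnegative and primitive, with right eigenvector $v=\vec{1}$ and left eigenvector $u=w$ both associated with the eigenvalue $1$, and all other eigenvalues strictly inside the unit circle. Because $w_{i}>0$ for all $i$, we have $v^{T}u=\vec{1}^{T}w=\sum_{i=1}^{N}w_{i}>0$, so the hypotheses of the Perron--Frobenius theorem are met, and it yields $\lim_{k\to\infty}\hat{W}^{k}=\frac{vu^{T}}{v^{T}u}=\frac{\vec{1}\,w^{T}}{\sum_{i=1}^{N}w_{i}}$.

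Finally, passing to the limit in $x(k)=\hat{W}^{k}x(0)$ gives $\lim_{k\to\infty}x(k)=\frac{\vec{1}\,w^{T}x(0)}{\sum_{i=1}^{N}w_{i}}=\vec{1}\cdot\frac{\sum_{i=1}^{N}w_{i}x_{i}(0)}{\sum_{i=1}^{N}w_{i}}$, i.e., every component converges to the common value $x_{i}^{*}=\frac{\sum_{i=1}^{N}w_{i}x_{i}(0)}{\sum_{i=1}^{N}w_{i}}$, which is the claimed weighted average. The convergence is asymptotic, and in fact geometric, governed by the second-largest eigenvalue modulus of $\hat{W}$, since the deviation of $x(k)$ from its limit lies in the complementary $\hat{W}$-invariant subspace on which $\hat{W}$ acts as a contraction.

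The main obstacle I anticipate is not the limiting argument itself, which is immediate once Perron--Frobenius is in hand, but rather making the reduction to $x(k+1)=\hat{W}x(k)$ airtight: one must confirm that the Hadamard product $T\otimes L$, with $T$ having the somewhat unusual diagonal $[T]_{ii}=(\sum_{j\in\mathcal{N}_{i}}w_{j})/l_{ii}$ and off-diagonal $[T]_{ij}=w_{j}$, indeed collapses to the modified Laplacian $\hat{L}$ whose rows encode ``weights assigned by the neighbors,'' and that the quoted step-size bound is precisely the one under which the lemma's three properties hold. I would also explicitly note that strong connectedness of $G$ is what guarantees irreducibility, hence primitivity, of $\hat{W}$, thereby linking the hypothesis of the theorem to the hypothesis of the lemma.
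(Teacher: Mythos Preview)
Your proposal is correct and follows essentially the same approach as the paper: write the dynamics as $x(k)=\hat{W}^{k}x(0)$, invoke the preceding lemma to certify that $\hat{W}$ is primitive nonnegative with left/right eigenvectors $w$ and $\vec{1}$, and then apply the Perron--Frobenius limit $\lim_{k\to\infty}\hat{W}^{k}=\vec{1}\,w^{T}/(w^{T}\vec{1})$ to conclude. Your write-up is in fact more explicit than the paper's (the matrix bookkeeping, the check $v^{T}u>0$, and the remark on geometric rate are all extras), but the skeleton is identical.
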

\begin{proof}
A consensus is
reached asymptotically, if the limit $\displaystyle\lim_{k\rightarrow\infty}\hat{W}^k$ exists. According to Perron-Frobenius theorem~\cite{horn}, this limit exists for primitive matrices. Note that, $\vec{1}=[1,\cdots,1]^T$ and $w$ are right and left eigenvectors of the primitive
nonnegative matrix $\hat{W}$ respectively. Thus, from~\cite{horn}
\begin{eqnarray*}
&&
\lim_{k\to\infty}x(k)= \lim_{k\to\infty}(\hat{W})^{k}x(0)\\
&&
x^{*}= \vec{1}\frac{w^{T}x(0)}{w^{T}\vec{1}}\\
&&
x^{*}=\vec{1}\frac{\sum_{i=1}^N w_{i}x_{i}(0)}{\sum_{i=1}^n w_{i}}
\end{eqnarray*}
\end{proof}

\begin{figure}[t]
  \centering
    \includegraphics[width=0.5\textwidth]{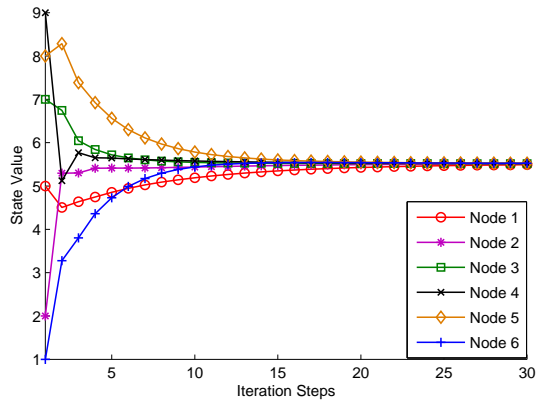}
    \caption{ Convergence of the network with a 6-nodes  
    ($\epsilon=0.3$).}\label{conv} 
\end{figure}

Next, to gain insights into the convergence property of the proposed algorithm, we present some
numerical results in Figure~\ref{conv}.
We consider the $6$-node network shown in Figure~\ref{fig1} where the nodes employ the proposed algorithm (with $\epsilon=0.3$) to reach a consensus.  
Next, we plot the updated state values at each node as a function of consensus iterations.
We assume that the initial data vector is $x(0)=[5,2,7,9,8,1]^T$ and the weight vector is $w=[0.65,0.55,0.48,0.95,0.93,0.90]^T$.
Note that, the parameter $\epsilon$ satisfies the condition mentioned in Theorem~\ref{th1}.
Figure~\ref{conv} shows the convergence of the proposed algorithm iterations for a fixed communication graph. It is observed that within $20$ iterations consensus has been
reached on the global decision statistics or weighted average of the initial values (states).

In the proposed consensus algorithm, weights assigned to node $i$'s data are updated by neighbors of the node $i$ rather than by node $i$ itself which addresses the first issue.

\subsection{Adaptive Design of the Update Rules based on Learning of Nodes' Behavior}

Next, to address the second issue, we exploit the statistical distribution of the sensing data and devise techniques to mitigate the influence of Byzantines on the distributed detection system. We propose a three-tier mitigation scheme where the following three steps are performed at each node: $1$) identification of Byzantine neighbors, $2$) estimation of parameters of identified Byzantine neighbors, and $3$) adaptation of consensus algorithm (or update weights) using estimated parameters. 

We first present the design of distributed optimal weights for the honest/Byzantine nodes assuming that the identities of the nodes are known. Later we will explain how the identity of nodes (i.e., honest/Byzantine) can be determined. 

\subsubsection{Design of Distributed Optimal Weights in the Presence of Byzantines}

In this subsection, we derive closed form expressions for the distributed optimal
weights which maximize the deflection coefficient. First, we consider the global test
statistic $\Lambda=\frac{\sum_{i=1}^{N_1} w_{i}^B\tilde{Y_{i}}+\sum_{i=N_1+1}^N w_{i}^HY_{i}}{\sum_{i=1}^{N_1} w_{i}^B+\sum_{i=N_1+1}^N w_{i}^H}$ and obtain a closed form solution for optimal centralized
weights. Then, we extend our analysis to the distributed
scenario.
Let us denote by $\delta_i^B$, the centralized weight given to the Byzantine node and by $\delta_i^H$, the centralized weight given to the Honest node. 
By considering $\delta_i^B=w_i^B/\left(\sum\limits_{i=1}^{N_1}w_i^B+\sum\limits_{i=N_1+1}^{N}w_i^H\right)$ and $\delta_i^H=w_i^H/\left(\sum\limits_{i=1}^{N_1}w_i^B+\sum\limits_{i=N_1+1}^{N}w_i^H\right)$, the optimal weight design problem can be stated formally as: 

\begin{equation*}
\begin{aligned}
& \underset{\{\delta_{i}^B\}_{i=1}^{N_1},\{\delta_{i}^H\}_{i=N_1+1}^{N}}{\max}
\dfrac{(\mu_{1}-\mu_{0})^{2}}{\sigma_{(0)}^{2}}  \\
&\qquad \qquad\text{st.}\quad \sum\limits_{i=1}^{N_1}\delta_i^B+\sum\limits_{i=N_1+1}^{N}\delta_i^H=1
\end{aligned}
\end{equation*}

where $\mu_{1}$, $\mu_{0}$ and $\sigma_{(0)}^{2}$ are given as in \eqref{mu0}, \eqref{mu1} and \eqref{sig}, respectively.
The solution of the above problem is presented in the next lemma.
\begin{lemma}
Optimal centralized weights which maximize the deflection coefficient are given as
\begin{eqnarray*}
\delta_i^B &=&\frac{w_i^B}{\sum\limits_{i=1}^{N_1}w_i^B+\sum\limits_{i=N_1+1}^{N}w_i^H},\\
\delta_i^H&=&\frac{w_i^H}{\sum\limits_{i=1}^{N_1}w_i^B+\sum\limits_{i=N_1+1}^{N}w_i^H}
\end{eqnarray*}
where
$w_i^B =\dfrac{(\eta_i\sigma_i^2-2P_i\Delta_i)}{\Delta_i^2 P_i (1-P_i) + 2M\sigma_i^4}$ and
$w_i^H=\dfrac{\eta_i}{2M\sigma_i^2}$.
\end{lemma}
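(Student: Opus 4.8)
The plan is to recognize the objective as a generalized Rayleigh quotient and maximize it with the Cauchy--Schwarz inequality. First I would substitute \eqref{mu0}, \eqref{mu1} and \eqref{sig}, rewritten in terms of the $\delta_i^B,\delta_i^H$, into $\mathcal{D}(\Lambda)$. The numerator collapses to a single linear form,
\[
\mu_1-\mu_0=\sum_{i=1}^{N_1}\delta_i^B(\eta_i\sigma_i^2-2P_i\Delta_i)+\sum_{i=N_1+1}^{N}\delta_i^H\,\eta_i\sigma_i^2,
\]
while the denominator is a positively weighted sum of squares,
\[
\sigma_{(0)}^2=\sum_{i=1}^{N_1}(\delta_i^B)^2\bigl(\Delta_i^2P_i(1-P_i)+2M\sigma_i^4\bigr)+\sum_{i=N_1+1}^{N}(\delta_i^H)^2\,2M\sigma_i^4.
\]
Writing $c_i$ for the bracketed coefficient of $\delta_i$ in the numerator and $b_i>0$ for the one in the denominator (with the convention $P_i=\Delta_i=0$ for honest nodes), the deflection coefficient becomes $\mathcal{D}(\Lambda)=\bigl(\sum_i\delta_i c_i\bigr)^2/\bigl(\sum_i\delta_i^2 b_i\bigr)$.

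Next I would observe that this ratio is invariant under any common rescaling of the $\delta_i$'s, so the constraint $\sum_i\delta_i=1$ does not affect the optimal direction and only fixes the normalization. Applying Cauchy--Schwarz as $\bigl(\sum_i\delta_i c_i\bigr)^2=\bigl(\sum_i(\sqrt{b_i}\,\delta_i)(c_i/\sqrt{b_i})\bigr)^2\le\bigl(\sum_i b_i\delta_i^2\bigr)\bigl(\sum_i c_i^2/b_i\bigr)$ yields $\mathcal{D}(\Lambda)\le\sum_i c_i^2/b_i$, with equality exactly when $\sqrt{b_i}\,\delta_i\propto c_i/\sqrt{b_i}$, i.e. $\delta_i\propto c_i/b_i$. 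Unpacking the shorthand gives $w_i^B\propto(\eta_i\sigma_i^2-2P_i\Delta_i)/(\Delta_i^2P_i(1-P_i)+2M\sigma_i^4)$ and $w_i^H\propto\eta_i\sigma_i^2/(2M\sigma_i^4)=\eta_i/(2M\sigma_i^2)$; imposing $\sum_i\delta_i=1$ then produces precisely the normalized $\delta_i^B,\delta_i^H$ in the statement. Equivalently, one can introduce a Lagrange multiplier for $\sum_i\delta_i=1$ and set the partial derivatives of $\mathcal{D}$ to a common constant, which reproduces the same stationarity condition $\delta_i\propto c_i/b_i$.

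I do not anticipate a real obstacle: once the objective is in quotient form the argument is essentially one line of Cauchy--Schwarz. The two points deserving a word of care are (i) the coefficients $c_i$ for Byzantine nodes may be negative, but since the numerator is squared this leaves the equality condition intact and the solution valid — it merely means the robust scheme can assign a negative weight to a strong attacker; and (ii) one should note $b_i\ge 2M\sigma_i^4>0$ for every node, so the quotient is well defined and the Cauchy--Schwarz extremizer is a genuine maximum rather than a saddle. The final step, passing from the centralized weights $\delta_i$ to the reported $w_i^B,w_i^H$, is just the definition $\delta_i=w_i/\mathrm{sum}(w)$, in which the overall normalization cancels and only the ratios matter.
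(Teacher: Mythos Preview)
Your argument is correct and in fact more complete than the paper's. The paper's proof is a single sentence: ``the above results can be obtained by equating the derivative of the deflection coefficient to zero.'' That is the first-order stationarity approach you mention at the end as an equivalent alternative; the paper does not verify second-order conditions or otherwise confirm that the critical point is a maximum.

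Your Cauchy--Schwarz route is a genuinely different and tighter argument. By casting $\mathcal{D}(\Lambda)=(\sum_i \delta_i c_i)^2/(\sum_i \delta_i^2 b_i)$ as a Rayleigh-type quotient and exploiting its scale invariance, you obtain a global upper bound with an explicit extremizer in one step, so optimality (not just stationarity) comes for free. The paper's calculus approach is shorter to state but leaves the verification that the stationary point is the maximizer implicit. Your remarks on the sign of $c_i$ for Byzantines and on $b_i\ge 2M\sigma_i^4>0$ are exactly the checks one needs to be sure the equality case of Cauchy--Schwarz is attainable and that the quotient is well posed; the paper addresses the possible negativity of $w_i^B$ only afterwards, in Remark~\ref{rem2}.
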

\begin{proof}
The above results can be obtained by equating the derivative of the deflection coefficient to zero.
\end{proof}

\begin{figure}[t]
  \centering
    \includegraphics[width=0.5\textwidth]{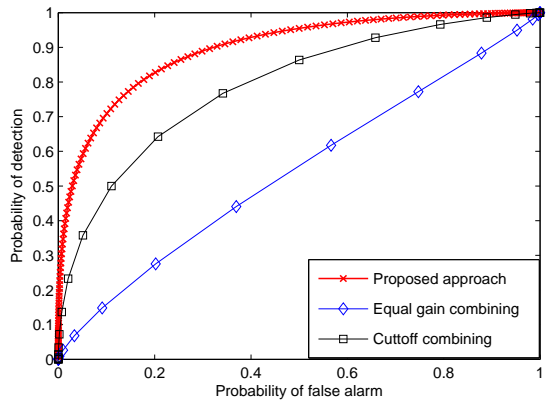}
    \caption{ROC for different protection approaches}\label{rocopt} 
\end{figure}
\begin{remark}
\label{rem2}

Distributed optimal weights can be chosen as $w_i^B$ and $w_i^H$. Thus, the value of the global test statistic (or final weighted average consensus) is the same as the optimal centralized weighted combining\footnote{Note that, weights $w_i^B$ can be negative and in that case convergence of the proposed algorithm is not guaranteed. However, this situation can be dealt off-line by adding a constant value to make $w_i^B\geq 0$ and changing the threshold $\lambda$ accordingly. More specifically, by choosing a constant $c$ such that $\left(w_i^B+\frac{c}{x_i(0)}\right)\geq 0,\forall i$ and $\lambda\leftarrow\lambda+\beta c$ where $\beta$ is number of nodes with $w_i^B<0$.}.
\end{remark}

Next, to gain insights into the solution, we present some numerical results in Figure~\ref{rocopt} that
corroborate our theoretical results. We assume that $M=12$,
$\eta_i=3$, $\sigma_i^2=0.5$ and the attack parameters are $(P_i,\Delta_i)=(0.5,9)$. In Figure~\ref{rocopt}, we compare our proposed weighted average consensus based detection scheme with the equal gain combining scheme\footnote{In equal gain combining scheme, all the nodes (including Byzantines) are assigned the same weight.} and the scheme where Byzantines are cut off or removed from the fusion process.

It can be clearly seen from the figure that our proposed scheme performs better than the rest of the schemes.

Notice that, the optimal weights for the Byzantines are functions of the attack parameters $(P_i,\;\Delta_i)$, which may not be known to the neighboring nodes in practice. In addition, the parameters of the honest nodes might also not be known. Thus, we propose a technique to learn or estimate these parameters. We then use these estimates to adaptively design the local fusion rule which are updated after each learning iteration.

\subsubsection{Identification, Estimation, and Adaptive Fusion Rule}

The first step at each node $m$ is to determine the identity ($I^i\in\{H,B\}$) of its neighboring nodes $i\in \mathcal{N}_m$. 
Notice that, if node $i$ is an honest node, its data under hypothesis $H_k$ is normally distributed $\mathcal{N}((\mu_{1k})_{i},(\sigma_{1k})_{i}^{2})$.
On the other hand, if node $i$ is a Byzantine node, its data under hypothesis $H_k$ is a Gaussian mixture which comes from $\mathcal{N}((\mu_{1k})_{i},(\sigma_{1k})_{i}^{2})$ with probability $(\alpha_1^i=1-P_i)$ and from $\mathcal{N}((\mu_{2k})_{i},(\sigma_{2k})_{i}^{2})$ with probability $\alpha_2^i=P_i$. 
Therefore, determining the identity ($I^i\in\{H,B\}$) of neighboring nodes $i\in \mathcal{N}_m$ can be posed as a hypothesis testing problem:

\begin{center}
\vspace{-0.5in}
\begin{align*}
& I_0\; (I^i=H):\; Y_i\; \text{is generated from a Gaussian distribution under each hypothesis $H_k$;}\\
& I_1\; (I^i=B):\; Y_i\; \text{is generated from a Gaussian mixture distribution under each hypothesis $H_k$.}
\end{align*}
\end{center}

Node classification can then be achieved using the maximum likelihood decision rule:
\begin{footnotesize}
\begin{equation}
f(Y_i|\;I_0) \quad \mathop{\stackrel{H}{\gtrless}}_{B} \quad  f(Y_i|\;I_1)
\end{equation}
\end{footnotesize}
where $f(\cdot|\;I_l)$ is the probability density function (PDF) under each hypothesis $I_l$.
However, the parameters of the distribution are not known. Next, we propose a technique to learn these parameters. For an honest node $i$, the parameters to be estimated are $((\mu_{1k})_{i},(\sigma_{1k})_{i}^{2})$ and for Byzantines the unknown parameter set to be estimated is $\theta=\{\alpha_j^i,(\mu_{jk})_{i},(\sigma_{jk})_{i}^{2}\}$, where $k=\{0,1\}$, $j=\{1,2\}$ and $i=1,\cdots,N_m$, for  $N_m$ neighbor nodes. 
These parameters are estimated by observing the data over
multiple learning iterations. In each iteration $t$, every node in the network observes the data coming from their neighbors for $D$ detection intervals to learn their respective parameters. It is assumed that each node has the knowledge of the true hypothesis for $D$ detection intervals (or history) through a feedback mechanism. 

First, we explain how the unknown parameter set for the distribution under the null hypothesis $(I_0)$ can be estimated. Let us denote the data coming from an honest neighbor node $i$ as $\mathbf{Y}_i(t)=[{y}_i^0(1),\cdots,{y}_i^0(D_1(t)),{y}_i^1(D_1(t)+1),\cdots,{y}_i^1(D)]$ where $D_1(t)$ denotes the number of times $H_0$ occurred in learning iteration $t$ and ${y}_i^k$ denotes the data of node $i$ when the true hypothesis was $H_k$. To estimate the parameter set, $((\mu_{1k})_{i},(\sigma_{1k})_{i}^{2})$, of an honest neighboring node, one can employ a maximum likelihood based estimator (MLE). We use $((\hat{\mu}_{1k})_{i} (t),(\hat{\sigma}_{1k})_{i}^{2}(t))$ to denote the estimates at learning iteration $t$, where each learning iteration consists of $D$ detection intervals.
The ML estimate of $((\mu_{1k})_{i},(\sigma_{1k})_{i}^{2})$ can be written in a recursive form as following:

\begin{figure*}[t!]
\begin{small}
\setcounter{mytempeqncnt}{\value{equation}}
\begin{equation}
\label{sige0}
(\hat{\sigma}_{10})_{i}^{2}(t+1)=
\frac{(\sum\limits_{r=1}^{t}D_1(r))[(\hat{\sigma}_{10})_{i}^{2}(t)+((\hat{\mu}_{10})_{i}(t+1)-(\hat{\mu}_{10})_{i}(t))^2]+\sum\limits_{d=1}^{D_1(t+1)}[y_i^0(d)-(\hat{\mu}_{10})_{i}(t+1)]^2}{\sum\limits_{r=1}^{t+1}D_1(r)} 
\end{equation}

\begin{equation}
\label{sige1}
(\hat{\sigma}_{11})_{i}^{2}(t+1)= 
\frac{\sum\limits_{r=1}^{t}(D-D_1(r))[(\hat{\sigma}_{11})_{i}^{2}(t)+((\hat{\mu}_{11})_{i}(t+1)-(\hat{\mu}_{11})_{i}(t))^2]+\sum\limits_{d=1}^{D-D_1(t+1)}[y_i^1(d)-(\hat{\mu}_{11})_{i}(t+1)]^2}{\sum\limits_{r=1}^{t+1}(D-D_1(r))}
\end{equation}
\hrulefill
\end{small}
\end{figure*}
 
\begin{eqnarray}
(\hat{\mu}_{10})_{i}(t+1)&=& \frac{\sum\limits_{r=1}^{t}D_1(r)}{\sum\limits_{r=1}^{t+1}D_1(r)} (\hat{\mu}_{10})_{i}(t)+ \frac{1}{\sum\limits_{r=1}^{t+1}D_1(r)}\sum\limits_{d=1}^{D_1(t+1)} y_i^0(d)\\
(\hat{\mu}_{11})_{i}(t+1)&=&\frac{\sum\limits_{r=1}^{t}(D-D_1(r))}{\sum\limits_{r=1}^{t+1}(D-D_1(r))} (\hat{\mu}_{11})_{i}(t)+ \frac{1}{\sum\limits_{r=1}^{t+1}(D-D_1(r))}\sum\limits_{d=D_1(t+1)}^{D} y_i^1(d)
\end{eqnarray}

where expressions for $(\hat{\sigma_{10}})_{i}^{2}$ and $(\hat{\sigma_{11}})_{i}^{2}$ are given in~\eqref{sige0} and ~\eqref{sige1}. Observe that, by writing these expressions in a recursive manner, we need to store only  $D$ data samples at any given learning iteration $t$, but effectively use all $tD$ data samples to determine the estimates. 

Next, we explain how the unknown parameter set for the distribution under the alternate hypothesis $(I_1)$ can be estimated. Since the data is distributed as a Gaussian mixture, we employ the expectation-maximization (EM) algorithm to estimate the unknown parameter set for Byzantines. Let us denote the data coming from a Byzantine neighbor $i$ as $\mathbf{\tilde{Y}}_i(t)=[{\tilde{y}}_i^0(1),\cdots,{\tilde{y}}_i^0(D_1(t)),{\tilde{y}}_i^1(D_1(t)+1),\cdots,{\tilde{y}}_i^1(D)]$ where $D_1(t)$ denotes the number of times $H_0$ occurred in learning iteration $t$ and ${\tilde{y}}_i^k$ denotes the data of node $i$ when the true hypothesis was $H_k$. Let us denote the hidden variable as $z_j$ with $j=\{1,2\}$ or ($Z=[z_1,z_2]$). Now, the joint conditional PDF of $\tilde{y}_i^k$ and $z_j$, given the parameter set, can be calculated to be 

\begin{eqnarray*}
P(\tilde{y}_i^k(d),z_j|\theta)&=& P(z_j|\tilde{y}_i^k(d),\theta)P(\tilde{y}_i^k(d)|(\mu_{jk})_{i},(\sigma_{jk})_{i}^{2})\\
&=& \alpha_j^i P(\tilde{y}_i^k(d)|(\mu_{jk})_{i},(\sigma_{jk})_{i}^{2})
\end{eqnarray*}

In the expectation step of EM, we compute the expectation of the log-likelihood function with respect to the hidden variables $z_j$, given the measurements $\mathbf{\tilde{Y}}_i$, and the current estimate of the parameter set $\theta^l$. This is given by

\begin{eqnarray*}
Q(\theta,\theta^l)&=& E[\log P(\mathbf{\tilde{Y}}_i,Z|\theta)|\mathbf{\tilde{Y}}_i,\theta^l]\\
&=& \sum\limits_{j=1}^{2}\sum\limits_{d=1}^{D_1(t)}\log[\alpha_j^i P(\tilde{y}_i^0(d)|(\mu_{j0})_{i},(\sigma_{j0})_{i}^{2})P(z_j|\tilde{y}_i^0(d),\theta^l)]\\
&+& \sum\limits_{j=1}^{2}\sum\limits_{d=D_1(t)+1}^{D}\log[\alpha_j^i P(\tilde{y}_i^1(d)|(\mu_{j1})_{i},(\sigma_{j1})_{i}^{2})P(z_j|\tilde{y}_i^1(d),\theta^l)]
\end{eqnarray*}

where 

\begin{equation}
P(z_j|\tilde{y}_i^k(d),\theta^l)=\frac{\alpha_j^i(l) P(\tilde{y}_i^k(d)|(\mu_{jk})_{i}(l),(\sigma_{jk})_{i}^{2}(l))}{\sum\limits_{n=1}^{2}\alpha_n^i(l) P(\tilde{y}_i^k(d)|(\mu_{nk})_{i}(l),(\sigma_{nk})_{i}^{2}(l))}.
\end{equation}

In the maximization step of EM algorithm, we maximize
$Q(\theta,\theta^l)$ with respect to the parameter set $\theta$ so as to compute the next parameter set: $$\theta^{l+1}=\underset{\theta}{\arg\max}\;\; Q(\theta,\theta^l).$$
First, we maximize $Q(\theta,\theta^l)$ subject to the constraint $\left(\sum\limits_{j=1}^{2}\alpha_j^i=1\right)$. We define the Lagrangian $\mathcal{L}$ as
$$\mathcal{L}=Q(\theta,\theta^l)+\lambda\{\sum\limits_{j=1}^{2}\alpha_j^i-1\}.$$ Now, we equate the derivative of $\mathcal{L}$ to zero: $$\frac{d}{d\alpha_j^i}\mathcal{L}=\lambda+\frac{\sum\limits_{d=1}^{D_1(t)}P(z_j|\tilde{y}_i^0(d),\theta^l)}{\alpha_j^i}+\frac{\sum\limits_{d=D_1(t)+1}^{D}P(z_j|\tilde{y}_i^1(d),\theta^l)}{\alpha_j^i}=0.$$
Multiplying both sides by $\alpha_j^i$ and summing over $j$ gives $\lambda=-D$. Similarly, we equate the derivative of $Q(\theta,\theta^l)$ with respect to $(\mu_{jk})_{i}$ and $(\sigma_{k})_{i}^{2}$ to zero. Now, an iterative algorithm for all the parameters is

\begin{eqnarray}
&&
\alpha_j^i(l+1)=\frac{1}{D}\left[\sum\limits_{d=1}^{D_1(t)}P(z_j|\tilde{y}_i^0(d),\theta^l)+\sum\limits_{d=D_1(t)+1}^{D}P(z_j|\tilde{y}_i^1(d),\theta^l)\right]\\
&&
(\mu_{j0})_{i}(l+1)= \frac{\sum\limits_{d=1}^{D_1(t)}P(z_j|\tilde{y}_i^0(d),\theta^l) \tilde{y}_i^0(d)}{\sum\limits_{d=1}^{D_1(t)}P(z_j|\tilde{y}_i^0(d),\theta^l)}\\
&&
(\mu_{j1})_{i}(l+1)= \frac{\sum\limits_{d=D_1(t)+1}^{D}P(z_j|\tilde{y}_i^1(d),\theta^l)\tilde{y}_i^1(d)}{\sum\limits_{d=D_1(t)+1}^{D}P(z_j|\tilde{y}_i^1(d),\theta^l)}\\
&&
(\sigma_{j0})_{i}^{2}(l+1)=\frac{\sum\limits_{j=1}^{2}\sum\limits_{d=1}^{D_1(t)}P(z_j|\tilde{y}_i^0(d),\theta^l) (\tilde{y}_i^0(d)-(\mu_{j0})_{i}(l+1))^2}{\sum\limits_{j=1}^{2}\sum\limits_{d=1}^{D_1(t)}P(z_j|\tilde{y}_i^0(d),\theta^l)}\label{s0}\\
&&
(\sigma_{j1})_{i}^{2}(l+1)=\frac{\sum\limits_{j=1}^{2}\sum\limits_{d=D_1(t)+1}^{D}P(z_j|\tilde{y}_i^1(d),\theta^l) (\tilde{y}_i^1(d)-(\mu_{j1})_{i}(l+1))^2}{\sum\limits_{j=1}^{2}\sum\limits_{d=D_1(t)+1}^{D}P(z_j|\tilde{y}_i^1(d),\theta^l)}\label{s1}
\end{eqnarray}

In the learning iteration $t$, let the estimates after the convergence of the above algorithm be denoted by $\hat{\theta}(t)=\{\hat{\alpha}_j^i(t),(\hat{\mu}_{jk})_{i}(t),(\hat{\sigma}_{jk})_{i}^{2}(t)\}$. These estimates are then used as the initial values for the next learning iteration $t+1$ that uses a new set of $D$ data samples.

After learning the unknown parameter set under $I_0$ and $I_1$, node classification can be achieved using the following maximum likelihood decision rule:
 
\begin{equation}
\hat{f}(\mathbf{Y}_i|\;I_0) \quad \mathop{\stackrel{H}{\gtrless}}_{B} \quad  \hat{f}(\mathbf{Y}_i|\;I_1)
\end{equation}

where $\hat{f}(\cdot)$ is the PDF based on estimated parameters.

Using the above estimates and node classification, the optimal distributed weights for honest nodes after learning iteration $t$ can be written as

\begin{equation}
w_i^H(t)=\frac{(\hat{\mu}_{11})_{i}(t)-(\hat{\mu}_{10})_{i}(t)}{(\hat{\sigma}_{10})_{i}^{2}(t)}.
\end{equation}

Similarly, the optimal distributed weights for Byzantines after learning iteration $t$ can be written as

\begin{equation}
w_i^B(t)=\frac{\sum\limits_{j=1}^{2}\hat{\alpha}_j^i(t)[(\mu_{j1})_{i}(t)-(\hat{\mu}_{j0})_{i}(t)]}
{\hat{\alpha}_1^i(t)\hat{\alpha}_2^i(t)\left(\left(\hat{\mu}_{10}(t)\right)_i-\left(\hat{\mu}_{20}(t)\right)_i\right)^2+\left(\hat{\alpha}_1^i(t)\left(\hat{\sigma}_{10}\right)^2_i(t)+\hat{\alpha}_2^i(t)\left(\hat{\sigma}_{20}\right)^2_i(t)\right)}
\end{equation}

\begin{figure}[t] 
  \centering
    \includegraphics[width=0.5\textwidth]{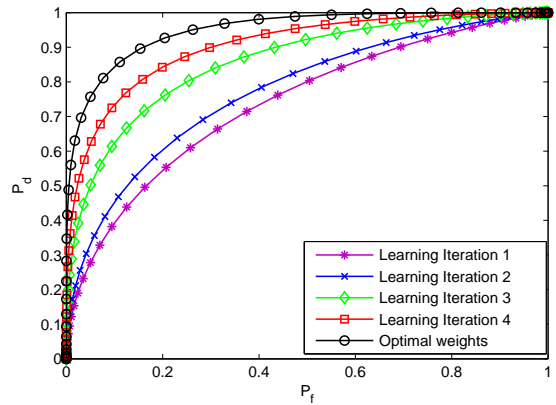}
    \caption{ROC for different learning iterations}
    \label{learn1}
\end{figure}

Next, we present some numerical results in Figure~\ref{learn1} to evaluate the performance of our proposed scheme.
Consider the scenario where $6$
nodes organized in an undirected graph (as shown in Figure~\ref{fig1})
are trying to detect a phenomenon. Node $1$ and node $2$ are considered to be Byzantines.
We assume that $((\mu_{10})_{i},(\sigma_{10})_{i}^{2})=(3,1.5)$,
$((\mu_{11})_{i},(\sigma_{11})_{i}^{2})=(4,2)$ and the attack parameters are $(P_i,\Delta_i)=(0.5,9)$. In Figure~\ref{learn1}, we plot ROC curves for different learning iterations. For every learning iteration, we assume that $D_1=10$ and $D=20$. It can be seen from Figure~\ref{learn1} that within $4$ learning iterations, detection performance of the learning based weighted gain combining scheme approaches the detection performance of weighted gain combining with known optimal weight based scheme.     

Note that, the above learning based scheme can be used in conjunction with the proposed weighted average consensus based algorithm to mitigate the effect of Byzantines.

\section{Conclusion and future work}
\label{conclusion}
In this paper, we analyzed the security performance of conventional consensus based algorithms in the presence of data falsification attacks. 
We showed that above a certain fraction of Byzantine attackers in the network, existing consensus based detection algorithm are ineffective.
Next, we proposed a robust distributed weighted average consensus algorithm and devised a learning technique to estimate the operating parameters (or weights) of the nodes. This enables an adaptive design of the local fusion or update rules to mitigate the effect of data falsification attacks. 
There are still many interesting questions that remain to be explored in the future work such as an analysis of the problem for time varying topologies. Note that, some analytical methodologies used in this paper are certainly exploitable for studying the attacks in time varying topologies. Other questions such as the optimal topology which incurs the fastest convergence rate can also be investigated.

\bibliographystyle{IEEEtran}
\bibliography{Book,Journal}

\end{document}